\newenvironment{inequality}
  {\crefalias{equation}{inequality}\begin{equation}}
  {\end{equation}\ignorespacesafterend}
\crefname{inequality}{ineq.}{ineqs.}
\newtheorem{theorem}{Theorem}[section]
\newtheorem{lemma}[theorem]{Lemma}
\newtheorem{defn}[theorem]{Definition}
\newtheorem{prop}[theorem]{Proposition}
\newtheorem{cor}[theorem]{Corollary}
\newtheorem{Claim}[theorem]{Claim}
\newenvironment{claimproof}[1]{\par\noindent\underline{Proof:}\space#1}{\hspace{1mm}$\blacksquare$}
\newcommand{\mylabel}[2]{#2\def\@currentlabel{#2}\label{#1}}
\begin{document}

\preprint{APS/123-QED}

\title{Noise-Robustness for Delegated Quantum Computation in the Circuit Model}

\author{Anne Broadbent}%
 \email{Anne.Broadbent@uottawa.ca}
 \author{Joshua Nevin}
 \email{jnevin@uottawa.ca}
\affiliation{University of Ottawa\\%
Department of Mathematics and Statistics\\ %
Ottawa, ON\\ %
Canada, K1N 6N5}

\begin{abstract} 
Cloud-based quantum computing, coupled with the rapid progress in quantum algorithms, brings to the forefront the question of verifiability in delegated quantum computations. In the current landscape of noisy quantum devices, this question must be addressed alongside noise tolerance. In this work, we revisit the circuit-based framework for verifiable quantum computation introduced by Broadbent [Theory of Computing, 2018], and extend it to the setting of server-side noise. Our contribution is an improved upper bound on the noise-tolerance threshold, achieved through a protocol that interleaves computation and test rounds in an indistinguishable manner. This structure enables a concise security proof against arbitrary deviations by the server, while ensuring robustness to realistic noise.

\end{abstract}

\maketitle 
\tableofcontents
\section{Introduction} 

The problem of simulating the behavior of many-particle quantum systems is widely believed to be classically intractable. This  intractability drives much of our interest in quantum computation, as a quantum computational device, if built, would be able to simulate the behavior of such a system and thus outperform classical devices. But this immediately raises a question: Once we build a quantum computer, and in the regime of high classical computational complexity, how can we be sure that such quantum device works as claimed? This question is related  to the context of  delegation of quantum computations: as  large-scale quantum computers are currently extremely expensive and therefore accessible almost exclusively remotely, how can a client be sure that the quantum server is actually performing the desired computation? To phrase this differently, how can a provider of cloud quantum services convince potential clients that their quantum computer does what it claims it does?  To add to the complexity of this question, we note that, in the current era of Noisy Intermediate-Scale Quantum (NISQ) computers (see \cite{Pre18}), noise is inherently unavoidable, meaning that any method to verify quantum computations would need to be able to deal with a certain level of noise, even for honest parties. This scenario has been studied in various works (see \cite{KD19,LMKO21,BWM+25arxiv, ArxPrLeaksFaultTol}). The contribution of this paper is to show that the protocol of \cite{Bro18}, which originally assumed perfect hardware, can be made resistant to noise with an improved noise-tolerance threshold over these previous works. This has the potential to enable new experimental demonstrations in today's regime of noisy delegated quantum computation. We also provide a novel proof of soundness of our noise-resistant protocol, showing that there is a relatively simple way to lift the soundness proof of \cite{Bro18} to the noise-robust setting. 

\section{Contributions and Techniques}\label{ComTechSec}

We now give a very high-level overview of the main result and outline the organization of the remainder of the paper. We recall a few fundamental definitions before proceeding. A family $\mathcal{C}_Q=\{C_n: n\in\mathbb{N}\}$ of quantum circuits is called \emph{polynomial-time uniform} if there exists a polynomial-time classical Turing machine, which, on input $n$, outputs a classical description of the $n$th circuit in the family. The class \textsf{BQP} consists of the decision problems that can be solved with a bounded probability of error using a polynomial-time uniform family of quantum circuits. More precisely:

\begin{defn} \emph{\textsf{BQP} consists of the languages $\mathcal{L}\subseteq\{0,1\}^*$ such that there is a uniform polynomial-time family $\{C_n: n\in\mathbb{N}\}$ of quantum circuits, where any $n\in\mathbb{N}$, all three of the following hold.}

\begin{enumerate}[label=\emph{\arabic*)}]
\item\emph{$C_n$ takes $n$ qubits as input and outputs a single bit.}
\item\emph{For any $x\in L$ with $|x|=n$, we have $\textnormal{Pr}(C_n(x))=1)\geq 1-q$.}
\item\emph{For any $x\not\in L$ with $|x|=n$, $\textnormal{Pr}(C_n(x)=1)\leq q$.}
\end{enumerate}

\end{defn}

The parameter $q$ in the definition above, usually taken as $q=\frac{1}{3}$, is arbitrary as long as it is fixed and $q<\frac{1}{2}$ (in particular, since, for the purposes of this paper, we are regarding $q$ as a constant, we automatically get that $q$ is bounded away from $\frac{1}{2}$). We call $q$ the \emph{inherent error probability} of the BQP computation.

Since we restrict ourselves to the setting in which we demand verification, for the remainder of this paper, in keeping with the usual terminology for interactive proofs  (see \Cref{ClassProofSystemSub}), we refer to the client and server as the \emph{verifier} and \emph{prover} respectively. In the prover-verifier setting, we want to design a protocol such that the verifier, with high probability, can be satisfied that the prover's output is indeed the output of the computation they initially had in mind, even though the prover is an untrusted party. This is referred to as the problem of \emph{delegated quantum computation}. In order to model such noise, following a common approach which is to interleave computation rounds with test rounds, which have pre-determined outcomes from the point of view of the verifier. We let $p^{\textnormal{noise}}$ be the probability of an honest prover obtaining an incorrect outcome in a test round. While it is not possible to construct a protocol which both produces correct answers with high probability and tolerates $p^{\textnormal{noise}}$ being arbitrarily high, we address the question of the tradeoff between $p^{\textnormal{noise}}$, the number of iterations, and the quality of the conclusion that can be drawn. Our main result, roughly speaking, is that there is an $N$-round interactive proof protocol between an almost-classical verifier $\mathbf{V}$ and a possibly-noisy polynomial-time quantum computer $\mathbf{P}$ with inherent error probability $q$ such that the $p^{\textnormal{noise}}$ that can be tolerated is $\left(1-O\left(\frac{1}{\sqrt{N}}\right)\right)\frac{2q-1}{2q-2}$, i.e. we can get arbitrarily close to $\frac{2q-1}{2q-2}$ from below, at the cost of more rounds. The remainder of this work is structured as follows:

\begin{enumerate}[label=\arabic*)]
    \item In \Cref{QIntMBQC}, we review previous literature on verified delegated quantum computation, both with perfect hardware and with noise. 
    \item In \Cref{PerfHardwareSec}, we review the protocol of \cite{Bro18} in detail, since this is the perfect-hardware protocol we will be building upon.
    \item In \Cref{MultiRoundSec}, we formally define our noise-resistant modification of the protocol of \cite{Bro18} and show that our problem can be cast as a simple combinatorial game, and, in \Cref{AttackDeviationSec}, we make explicit the connection between the error model of this game and the error model of \cite{Bro18}.
    \item In \Cref{ProbLemmaSec}, we review some basic probability notions which we need in order to analyze this combinatorial game defined in \Cref{MultiRoundSec} and, in particular, to prove our main result.
    \item Finally, in \Cref{NoiseResProtocol}, we prove our main result, which is stated formally as \Cref{MainThmMultiRunStatement}.
    \item The proof of \Cref{MainThmMultiRunStatement} requires a self-contained probability lemma, \Cref{IfWSmallLowerBound}, which abstractly captures our error model. To avoid interrupting the flow of the main proof, the proof is this lemma 
    is placed in the appendix. 
\end{enumerate}

\section{Related Work}\label{QIntMBQC}

The  problem of verifying quantum computations with noisy hardware has been studied in previous work (\cite{KD19, LMKO21} and, very recently, \cite{BWM+25arxiv}). Additionally, \cite{FK17} and \cite{BFK10} also provide protocols for delegated computation with perfect devices. Both the measurement-based protocol in \cite{FK17} and the circuit-based protocol in \cite{Bro18} are extensions of protocols that provide privacy in their respective models \cite{BFK09,Bro15}. Early experimental demonstrations have demonstrated privacy \cite{BKB+12,FBS+14} and verifiability (\cite{BFKW13}, \cite{PublishedVBlindTrap},  \cite{WSS+25}). In particular, \cite{PublishedVBlindTrap} is a real-word implementation of the protocol of \cite{LMKO21} which consists of a trapped-ion quantum server and a photonic client that is able to steer a network qubit just by measuring the polarization of their photon in one of a fixed number of bases. In this way, the (almost) classical client can implement a verifiable blind quantum computation on a brickwork state possessed by the server. 

The protocol of \cite{LMKO21} is of particular importance for us for three reasons: Firstly, our task is very closely related to that of \cite{LMKO21}. Secondly, we use several key ideas from \cite{LMKO21} in our adaptation of the protocol of \cite{Bro18} and, finally, we explicitly improve on the performance of the protocol of \cite{LMKO21}. We now  provide an overview of the protocol of \cite{LMKO21} and some brief analysis of its performance for comparison purposes. The work of \cite{LMKO21} studies the problem of delegating quantum computation in the \emph{measurement-based quantum computation} (MBQC) model. The precise details of the MBQC model are not essential for our purposes, but can be found in \cite{DK06} and \cite{Joz06arxiv}. Briefly, given a finite graph $G=(V,E)$ on $N$ vertices, we construct an $N$-qubit quantum state $|G\rangle$ consisting of a qubit in the $|+\rangle$ state for each vertex, where we apply a $\textsf{CZ}$-gate between two qubits if and only if there is an edge between the corresponding vertices. The input to a computation then consists of a graph $G$, a flow $f$ on $G$ (see \cite{DK06}), and a set of $\{\phi_v: v\in V\}$ of angles initially specifying the bases in which each qubit is to be measured. A quantum computation can then be cast as an adaptive sequence of measurements on the qubits of $|G\rangle$, where the order in which the measurements are performed is specified by $f$ and the measurement bases that are updated as the computation progresses through the flow. 

\bigskip 

The input to the prover-verifier protocol of \cite{LMKO21} is a graph $G=(V,E)$, a flow $f$ and a family $\{\phi_v: v\in V\}$ of initial measurement bases as above. Together, the prover and verifier execute an $n$-round protocol, where each round consists of an MBQC on $|G\rangle$. The underlying graph $G$ is an input to the protocol which is fixed across all rounds. Furthermore, the verifier initially chooses a random partition $(C,T)$ of $[n]$ (that is, $C\cup T=[n]$ as a disjoint union). They also choose a $k$-coloring of $G$. Recall that a subset $S$ of $V$ is called an \emph{independent set} if there is no edge of $G$ with both endpoints in $S$, and a \emph{$k$-coloring of $G$} is a partition of the vertex set $V$ into $k$ pairwise-disjoint subsets $V_1, \cdots, V_k$, where $V=V_1\cup\cdots\cup V_k$, such that, for each $j\in [k]$, $V_j$ is an independent subset of $V$.

\bigskip

The verifier keeps the sets $C,T$ secret, as well as the $k$-coloring. Letting $d,t$ denote $|C|$ and $|T|$ respectively, the parameters $d,t,k$ are published and thus known to the prover. The sequence of rounds is randomly partitioned into a set of \emph{computation rounds} and a set of \emph{test rounds}, but the prover does not know which is which and is provably unable to distinguish between them. For each test round, the verifier secretly chooses a color $j\in [k]$, corresponding to an independent subset of $V$. For this test round, the vertices of $V$ with color class $j$ are \emph{trap qubits} on which measurements have predetermined outcomes known to the verifier. 

\bigskip 

After all $n$ rounds of the protocol are completed, the verifier has a string of classical outcomes $\{y_j: j\in C\}$, where $y_j$ is a classical bit string corresponding to the sequence of final measurement outcomes on the output qubits in round $j$. The verifier first counts the number $w$ of test rounds which were not passed, and, if the fraction $w/t$ is above a certain threshold, they abort the protocol. If $w/t$ is below the threshold, then the verifier looks at $\{y_j: y\in C\}$, and, if there exists a bit string $y$ such that $\#\{j\in C: y_j=y\}\geq d/2$. Otherwise, they again abort the protocol. The test rounds interleaved with the computation rounds are only there for the purpose of catching a potentially cheating prover, but even if we can guarantee that the prover is honest, we still need majority voting on the computation rounds to guarantee that there is a low probability that non-malicious noise has caused the verifier to accept a corrupted outcome from the computation rounds. Informally, the purpose of this paradigm of  is to ensure that, if the level of noise is below a certain (constant) threshold, then \emph{both} of the following hold:

\begin{enumerate}[label=\roman*)]
\item There is a low probability that a majority computational output is not found (i.e. a low probability of aborting); AND
\item If a majority computational output $y$ is found, then there is a low probability that this output is erroneous.
\end{enumerate}
\bigskip
This requires us to make the idea of \emph{noise} somewhat more precise. In \cite{LMKO21}, noise is modelled as a sequence of round-dependent arbitrary CPTP maps, i.e.~a possibly different arbitrary CPTP map acting on each round, with a probability parameter $p^{\textnormal{noise}}$, an upper bound on the probability that at least one of the trap measurements fails in any single test round. Informally the main result of \cite{LMKO21} states that, letting $n=d+t$, both of i) and ii) above hold as long as both $w/t\in\left(0, \frac{1}{k}\cdot\frac{2q-1}{2q-2}\right)$ and $p^{\textnormal{noise}}<\frac{1}{k}\cdot\frac{2q-1}{2q-2}$. As noted in \cite{LMKO21}, for a fixed graph state, using fewer colors is desirable, since it both allows a larger value of $p^{\textnormal{noise}}$ and requires fewer rounds to obtain specified security and correctness targets. Of course, for a nontrivial quantum computation on a graph state, the best possible chromatic bound is $k=2$. This is sufficient for universal quantum computation as well, as there are universal families of graph states whose corresponding graphs are all 2-colorable (see \cite{BFK10}). However, if we have a given graph state in mind, which is not necessarily 2-colorable, converting this to a 2-colorable graph state that executes the same computation potentially requires a large scaling in the number of qubits. On the other hand, our main result ---  \Cref{MainThmMultiRunStatement} --- behaves statistically, in a way that corresponds to a ``$k=1$"-version of the protocol of \cite{LMKO21}, for a computation on \emph{any} graph state, including those that are very far from 2-colorable, without the need to scale to a larger 2-colorable graph state implementing the same computation.

\section{The Perfect-Hardware Protocol of \cite{Bro18}}\label{PerfHardwareSec}

\subsection{Classical Proof Systems}\label{ClassProofSystemSub}

The protocol of \cite{Bro18} is based on \emph{interactive proof systems}. The concept of interactive proof systems was  first introduced in the 1980s in \cite{Bab85} and \cite{GMR89}. Informally, given a language $\mathcal{L}\subseteq\{0,1\}^*$, a proof system is a protocol between a computationally unbounded but not necessarily honest \emph{prover} $\mathbf{P}$ and a computationally bounded but always honest \emph{verifier} $\mathbf{V}$, where $\mathbf{V}$ is usually taken to be a polynomial-time algorithm. For any $x\in\{0,1\}^*$, $\mathbf{V}$ is trying to decide whether to accept $x$ (i.e.~conclude that $x\in\mathcal{L}$ or \emph{reject} $x$ (i.e.~conclude that $x\not\in\mathcal{L}$).

\begin{defn}\label{CompSoundDefn} \emph{Given a language $\mathcal{L}$, a \emph{proof} system for $\mathcal{L}$ is a protocol satisfying the following:}

\begin{itemize}
\item \emph{\textbf{Completeness:} For any $x\in\mathcal{L}$ and any prover $\mathbf{P}$ which honestly follows the protocol, $\mathbf{P}$ convinces $\mathbf{V}$ to accept $x$ with probability $c$.}
\item\emph{\textbf{Soundness:} If $x\not\in\mathcal{L}$, then, for any prover $\mathbf{P}^*$ (where $\mathbf{P}^*$ may arbitrarily deviate from the protocol), the probability that $\mathbf{P}^*$ convinces $\mathbf{V}$ to accept $x$ is at most~$s$.}
\end{itemize}

\end{defn}

\bigskip

The parameters $c,s\in [0, 1)$ are arbitrary. They can always be amplified to be arbitrarily close to one (resp.~zero) by a polynomial amount of repetition as long as $c-s>\frac{1}{\textnormal{poly(n)}}>0$ is satisfied. For the purposes of this paper, we are only ever concerned with the case where $c,s$ are fixed constants, independent of $n$, in which case $c>s$ is sufficient. We refer to $c,s$ as the \emph{completeness (resp.~soundness) parameter} of the interactive proof protocol. If $c$ and $s$ are one and zero respectively, then we refer to this stronger condition as \emph{perfect completeness} (resp.~\emph{perfect soundness}). An \emph{interactive proof system} is a protocol satisfying the above completeness and soundness conditions, in which $\mathbf{P}$ and $\mathbf{V}$ are allowed to exchange messages in multiple rounds until $\mathbf{V}$ either accepts or rejects the initial input string~$x$, where $\mathbf{P}$ is trying to convince $\mathbf{V}$ that $x\in\mathcal{L}$, and $\mathbf{V}$ is trying to determine whether $\mathbf{P}$ has produced a valid proof that $x\in\mathcal{L}$. 

\subsection{Quantum Interactive Proofs in the Circuit Model}\label{QCircIntSub}

The class \textsf{QIP}, the \emph{quantum} analogue of the complexity class \textsf{IP} was introduced by Watrous in \cite{Wat03}. That is, \textsf{QIP} is the set of languages $\mathcal{L}$ for which there exists an interactive proof system in which $\mathbf{V}$ is a polynomial-time quantum verifier and $\mathbf{P}$ is a computationally unbounded quantum prover, but the situation we want to study is somewhat more restricted. A natural way to restrict the above definition is to consider the class of languages $\mathcal{L}$ which admit an interactive proof system in which $\mathbf{P}$ is a  polynomial-time quantum prover and $\mathbf{V}$ is a classical (probabilistic) polynomial-time machine, but, at the time of writing, the problem of verifying an arbitrary quantum computation for a \emph{purely} classical verifier and a \textsf{BQP} prover is still an open problem (however, under computational assumptions, this is possible~\cite{Mah18a}). Instead, we consider the setting in which $\mathbf{P}$ is a \textsf{BQP} prover and $\mathbf{V}$ is a ``hybrid" classical-quantum machine, that is, a classical probabilistic polynomial-time machine augmented with some rudimentary quantum operations. The class of languages defined by interactive proof systems with such a prover-verifier model was first studied in \cite{ABEM17arxivb}, which defined the complexity class \textsf{QPIP}$_{\kappa}$, where $\kappa$ is the number of qubit registers permitted for the hybrid classical-quantum verifier. In this setting, the verifier has a classical part, consisting of a classical probabilistic polynomial-time machine, and a quantum part, consisting of a register of $\kappa$ qubits. The verifier can perform arbitrary quantum computations on the qubits of these $\kappa$ registers. The classical part of the machine controls which quantum operations are to be performed and the measurement outcomes from the quantum part can be accepted as input to the classical part as well.

\bigskip

The class \textsf{QPIP}$_{\kappa}$ can be viewed as capturing a set of languages defined by protocols whose verifier represents ``realistic" computational abilities for devices that we can build now, and the main result of \cite{ABEM17arxivb} was to show that there exists a constant $\kappa$ such that \textsf{BQP}=\textsf{QPIP}$_{\kappa}$.

\bigskip

In \cite{Bro18}, Broadbent studied a complexity class which is a  modification of \textsf{QPIP}$_{\kappa}$. To formally state the result of \cite{Bro18}, we need the following definitions. The analogous results and Definitions of \cite{Bro18} are originally stated in terms of a fixed inherent error probability $q=1/3$ of a \textsf{BQP} computation. For our purposes, we need a slightly generalized version of this terminology so that we can measure how the noise tolerance of our modified protocol scales with $q$. We can then formally stated the main result of \cite{Bro18} in this modified language.

\begin{defn} \emph{A \emph{promise problem} is a pair $(\Pi_Y, \Pi_N)$, where $\Pi_Y$ and $\Pi_N$ are disjoint subsets of $\{0,1\}^*$, corresponding respectively to YES and NO instances of the problem.}  \end{defn}

The particular promise problem we are concerned with is the following.

\begin{defn}\label{QCirProProbDef} \emph{We define the promise-problem Q-CIRCUIT$_q$ as follows: Let $U$ be a quantum circuit consisting of a sequence of gates drawn from the universal gate set $\{\textsf{X, Z, H, CNOT, T}\}$. We set}
$$\mathcal{P}(U):=|\langle\mathbb{I}_{n-1}\otimes |0\rangle\langle 0|, U|0\rangle^{\otimes n}\rangle|^2$$
\emph{This is the probability of observing $|0\rangle$ in the $n$th register after making a measurement of the $n$th qubit of $U|0\rangle^{\otimes n}$ in the computational basis. Then:}
\begin{enumerate}[label=\emph{\arabic*)}]
    \item\emph{$Q\textnormal{-CIRCUIT}_{Y, q}$ is the set of $n$-qubit quantum circuits $U$ with $\mathcal{P}(U)\geq 1-q$}
    \item\emph{$Q\textnormal{-CIRCUIT}_{N, q}$ is the set of $n$-qubit quantum circuits $U$ with $\mathcal{P}(U)\leq q$}
\end{enumerate}

\end{defn}

The problem above is \textsf{BQP}-complete (see \cite{ABEM17arxivb}) which makes it a natural setting to capture arbitrary \textsf{BQP} computations. 

\begin{defn}  \emph{Let $\mathcal{S}=\{\mathcal{S}_1, \cdots, \mathcal{S}_{\ell}\}$, where each $\mathcal{S}_1, \cdots, \mathcal{S}_{\ell}$ is a set of density operators. We say that a classical polytime machine $\mathbf{V}$ is $\mathcal{S}$-\emph{augmented} if it is also given the ability to randomly generate states from each of $\mathcal{S}_1, \cdots, \mathcal{S}_{\ell}$. That is, $\mathbf{V}$ is augmented with the ability to pick an index $i$ in $\{1, \cdots, \ell\}$ and then randomly generate one of the elements of $\mathcal{S}_i$.} \end{defn}

\begin{defn}\label{SQuantIntProvSysDef} \emph{Let $\mathcal{S}=\{\mathcal{S}_1, \cdots, \mathcal{S}_{\ell}\}$, where each $\mathcal{S}_1, \cdots, \mathcal{S}_{\ell}$ is a set of density operators. Let $\Pi=(\Pi_Y, \Pi_N)$ be a promise problem. Let $c,s\in [0,1)$ be parameters with $c>s$. An \emph{$(\mathcal{S}, c, s)$-quantum-prover interactive proof system} for $\Pi$ consists of a prover $\mathbf{P}$ and a verifier $\mathbf{V}$ such that}
\begin{enumerate}[label=\emph{\arabic*)}]
\item \emph{$\mathbf{V}$ is an $\mathcal{S}$-augmented classical polytime machine. Furthermore, when $\mathbf{V}$ picks an  index $i$ in $\{1, \cdots, \ell\}$ and then randomly generates one of the elements of $\mathcal{S}_i$, the index $i$ is then sent to $\mathbf{P}$, but the element of $\mathcal{S}_i$ chosen by $\mathbf{V}$ is kept secret from $\mathbf{P}$}
\item\emph{$\mathbf{P}$ is a quantum polytime machine}
\end{enumerate}
\emph{and, furthermore, the following hold}
\begin{enumerate}[label=\emph{\alph*)}]
\item\emph{If $x\in\Pi_Y$, then $\mathcal{P}$, behaving honestly, convinces $\mathbf{V}$ of this fact with probability $\geq c$}
\item\emph{If $x\in\Pi_N$, then there is no prover $\mathbf{P}^*$, even a computationally unbounded one, which can convince $\mathbf{V}$ that $x\in\Pi_Y$ with probability $\geq s$}
\end{enumerate}
\emph{Given $\mathcal{S}, c, s$ as above, we define QPIP$_{\mathcal{S}, c, s}$ to be the set of promise problems $\Pi$ such that there is an $\mathcal{S}$-quantum-prover interactive proof system for $\Pi$.}
\end{defn} 

In particular, in the complexity class in \Cref{SQuantIntProvSysDef}, we are still considering a hybrid quantum-classical verifier as in \cite{ABEM17arxivb}, but the set of quantum operations the verifier is permitted has been restricted even further. In the setting of \Cref{SQuantIntProvSysDef}, the verifier no longer has any ability to perform quantum computations. Instead, the verifier can only prepare and send qubits drawn from some pre-specified finite set, but they have no quantum memory or quantum processing power. The main result of \cite{Bro18} is as follows.

\begin{theorem}\label{MainThDelegQComP} Let $$\mathcal{S}:=\{\{|0\rangle, |1\rangle\}, \{|+\rangle, |-\rangle\}, \{\emph{\textsf{P}}|+\rangle, \emph{\textsf{P}}|-\rangle\}, \{\emph{\textsf{T}}|+\rangle, \emph{\textsf{T}}|-\rangle\}, \{\emph{\textsf{PT}}|+\rangle, \emph{\textsf{PT}}|-\rangle\} \}$$
Then there is an $(\mathcal{S}, \frac{8}{9}, \frac{7}{9})$-quantum interactive proof system for the promise problem \textnormal{Q-CIRCUIT}$_{\frac{1}{3}}$. Thus, for any fixed constants $c,s$ with $c>s$, we have $\emph{\textsf{BQP}}=\emph{\textsf{QPIP}}_{c,s}$.

\end{theorem}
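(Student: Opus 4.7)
My plan is to build the protocol as a hybrid of quantum one-time pad (QOTP) encryption, Pauli trap insertion, and gate teleportation for non-Clifford gates, then analyze completeness and soundness in turn. The verifier $\mathbf{V}$ first fixes a QOTP-based encoding of each wire of $U$ in which the data qubit is interleaved with auxiliary trap qubits whose final states $\mathbf{V}$ can predict. The encoded qubits are prepared from the first three sets of $\mathcal{S}$, which furnish exactly the Pauli eigenstates needed for both the data inputs (computational basis) and the $X$-, $Y$-, $Z$-trap preparation. The last two sets $\{\textsf{T}|+\rangle,\textsf{T}|-\rangle\}$ and $\{\textsf{PT}|+\rangle,\textsf{PT}|-\rangle\}$ supply the magic states required to implement each $\textsf{T}$-gate of $U$ by gate teleportation, with $\mathbf{V}$ choosing which of the two states to send (based on a classical outcome reported by $\mathbf{P}$) so as to conceal the $\textsf{P}$-correction. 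Throughout the interaction, $\mathbf{V}$ tracks Pauli keys classically: Clifford gates induce an efficiently computable key update, while each $\textsf{T}$-gadget induces a key update that depends both on the current keys and on a measurement bit relayed by $\mathbf{P}$. At the end, $\mathbf{P}$ measures all qubits in the computational basis and reports the outcomes; $\mathbf{V}$ decrypts using the accumulated keys, checks every trap against its predicted value, and accepts the decrypted output bit iff all trap checks pass.

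For completeness, I would verify that when $\mathbf{P}$ behaves honestly the encoded circuit computes $U|0\rangle^{\otimes n}$ exactly on the data register and preserves every trap, so that the trap checks succeed with probability $1$; the reported bit is then correct with the underlying BQP probability $\geq 1-q$. The stated value $c=\frac{8}{9}$ is obtained by invoking the amplification freedom of \textsf{BQP}, i.e.~by applying standard majority-vote amplification to $U$ before encoding so that $\mathcal{P}(U)\geq\frac{8}{9}$; this is a free classical preprocessing step for $\mathbf{V}$ and does not enlarge $\mathcal{S}$.

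The heart of the argument is the soundness bound $s=\frac{7}{9}$. I would use the Pauli-twirl property of the QOTP to argue that, from $\mathbf{V}$'s point of view, any CPTP deviation applied by a cheating $\mathbf{P}^*$ on the encrypted register is equivalent to a convex combination of Pauli deviations (conditioned on the keys). Propagating such a Pauli through the remaining Clifford portion of the circuit and the $\textsf{T}$-gadgets yields an effective Pauli acting on the final encoded register. I would then argue in two steps: first, that for $\mathbf{V}$ to accept a wrong bit the effective Pauli must act nontrivially on the logical output qubit but trivially on every trap it encounters; and second, that by the design of the trap placement and the randomisation of trap bases across the $X$-, $Y$-, and $Z$-eigenstate sets in $\mathcal{S}$, a counting argument bounds the probability of such a non-detecting attack by $\frac{7}{9}$. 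The final claim $\textsf{BQP}=\textsf{QPIP}_{c,s}$ for arbitrary constants $c>s$ then follows: the inclusion $\textsf{BQP}\subseteq\textsf{QPIP}_{c,s}$ by polynomial sequential amplification of the $\bigl(\mathcal{S},\frac{8}{9},\frac{7}{9}\bigr)$-protocol, and the reverse inclusion by noting that a \textsf{BQP} machine can simulate both $\mathbf{V}$ and the honest $\mathbf{P}$ end-to-end.

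The step I expect to be the main obstacle is controlling how Pauli attacks interact with the $\textsf{T}$-gadgets. Because the $\textsf{P}$-correction applied during each gadget depends on a classical outcome that the adversary itself reports, a malicious $\mathbf{P}^*$ can in principle correlate its subsequent Pauli deviations with the history of keys conveyed through these outcomes. The main technical work is to show that, even after conditioning on $\mathbf{P}^*$'s reported bits, the random choice between $\textsf{T}|\pm\rangle$ and $\textsf{PT}|\pm\rangle$ together with the QOTP twirl still decorrelates $\mathbf{P}^*$'s attack from the trap placement, so that the soundness counting argument is unaffected. Once this invariance is established, key bookkeeping for the Clifford portion, decryption, trap-consistency verification, and the final amplification step are all routine.
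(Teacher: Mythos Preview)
Your proposal describes a trap-insertion protocol in which data and trap qubits are interleaved within a single execution, with traps drawn from $X$-, $Y$-, $Z$-eigenstates. This is not the mechanism of \cite{Bro18} as outlined in \Cref{SketchSubSection} of the present paper. There, the verifier does not embed traps alongside data; instead, for each execution of IPS1 she selects uniformly at random one of \emph{three run types}: a computation run (the prover applies the target circuit to a Pauli-encryption of $|0\rangle^{\otimes n}$), an $\textsf{X}$-test run (effectively the identity on an encryption of $|0\rangle^{\otimes n}$), or a $\textsf{Z}$-test run (the identity on an encryption of $|+\rangle^{\otimes n}$). The prover cannot distinguish the run types. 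Correspondingly, the set $\{\textsf{P}|+\rangle,\textsf{P}|-\rangle\}$ in $\mathcal{S}$ is not a $Y$-trap resource but one of the auxiliary states for the $\textsf{T}$-gate gadget, used to hide whether a $\textsf{P}$-correction has been applied.

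This structural difference is exactly where the constants $\tfrac{8}{9}$ and $\tfrac{7}{9}$ originate, and your derivation of them is incorrect. Completeness is $1-\tfrac{q}{3}$ (\Cref{FixedInErrorProb}): with probability $\tfrac{2}{3}$ the round is a test and an honest prover passes with probability $1$, and with probability $\tfrac{1}{3}$ it is a computation round accepting with probability $1-q$; at $q=\tfrac{1}{3}$ this gives $\tfrac{8}{9}$. It does not come from pre-amplifying the \textsf{BQP} circuit. Soundness $\tfrac{q+2}{3}$ likewise reflects the $\tfrac{1}{3}$ run-type split together with \Cref{Epsilon0Prot}: any non-benign Pauli deviation is caught with certainty in the appropriate test run, while benign deviations leave the computation-round output distribution unchanged; a counting argument over interleaved trap positions and randomised bases would not produce these specific numbers. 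Finally, the adaptive-attack difficulty you flag in your last paragraph (the prover correlating later deviations with reported gadget bits) is handled in \cite{Bro18} not by a twirl-and-decorrelation argument but via the auxiliary EPR-based protocols IPS2 and IPS3, in which the verifier replaces all prepared states by halves of EPR pairs, sends only random bits during the interaction, and delays every random choice until after communication has ended, so that the prover is reduced to a single unitary acting on EPR halves.
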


The next section provides a brief overview of the method to achieve \cref{MainThDelegQComP}. This is necessary for our purposes because we will be relying to some extent on the inner workings of \cite{Bro18} in order to show the security of the noise-robust version. 

\subsection{A Brief Overview of the Protocol of \cite{Bro18}}\label{SketchSubSection}

Suppose that we have a classical polytime verifier $\mathbf{V}$ who has a particular unitary $U$ on $n$ qubits, and they wish to decide whether $U$ is a YES-instance or a NO-instance of Q-CIRCUIT$_{1/3}$. The prover and verifier run an interactive proof protocol, denoted IPS1. The input to IPS1 is an $n$-qubit quantum circuit~$\mathcal{C}$ whose 1- and 2-qubit gates are drawn from $\{\textsf{X, Z, CNOT, H, T}\}$. The verifier sends the circuit description to the prover (i.e.~the circuit itself is not kept secret). To begin IPS1, the verifier chooses at random one of three possible runs to interact with the prover. One of these is a computation run, the other two are test runs. The test runs are used to detect a deviating prover. There are two types of test runs: $\textsf{X}$-test and $\textsf{Z}$-test. The prover cannot distinguish between the three runs and the verifier does not inform the prover of which type of run they have chosen. In a computation run, the prover executes the target circuit on a Pauli-encryption of $|0\rangle^{\otimes n}$. In a $\textsf{X}$-test rounds, the prover is just executing identity on a Pauli-encryption of $|0\rangle^{\otimes n}$, and, in a $\textsf{Z}$-test round, the prover is just executing identity on a Pauli-encryption of $|+\rangle^{\otimes n}$. On any of the three types of rounds, the prover and verifier proceed through the circuit gate-by-gate, with the verifier updating the (secret) classical encryption keys, exchanging classical messages with the prover, and sending encrypted auxiliary qubits to the prover for the $\textsf{T}$-gate gadgets. At the end of a single round of IPS1 which is either a computation round or an $\textsf{X}$-test-round, the verifier is holding a single classical output bit $c$. If the round is an $\textsf{X}$-test-round, then the expected predetermined outcome $b\in\{0,1\}$ of the run is known to the verifier, and, if the prover is honest, then $b=c$. For $\textsf{Z}$-test-rounds, the verifier does not  check the output bit, they only check some of the internal measurements, whose expected outcomes are also known to the verifier. For the purposes of our abstract error model, this distinction between $\textsf{X}$- and $\textsf{Z}$-test rounds does not actually matter, because our $p^{\textnormal{noise}}$ is just a measure of the probability that, on interaction with an honest prover, the verifier detects at least one error in a given test round. 

\bigskip

The proof of \emph{completeness} of the protocol against an honest prover in \cite{Bro18} is relatively straightforward (we just verify that the gadgets and key updates produce the correct outcome, as if $\mathbf{V}$ had executed the target circuit themselves). The proof of \emph{soundness} against an arbitrary deviating prover is more challenging. Given that the prover and verifier exchange classical messages, the prover receives encrypted data and auxiliary qubits from the verifier, and the prover both executes unitaries and performs measurements, it seems difficult and cumbersome to model an arbitrary attack from the prover in such a way as to provide a rigorous proof of soundness. The essential proof technique in \cite{Bro18} is to introduce an intermediate proof protocol whose soundness parameter upper bounds the soundness parameter of IPS1, where, in this intermediate proof protocol, $\mathbf{V}$ is augmented with more power and a general attack model is easier to model. Some brief discussion of this is necessary for our purposes. In particular, \cite{Bro18} defines an EPR-based version of IPS1, denoted IPS2, in which:
\begin{enumerate}[label=\arabic*)]
\item All quantum inputs (i.e.~both encrypted auxiliary qubits and data qubits) sent by $\mathbf{V}$ are half-EPR pairs, where one wire is entangled with one of $\mathbf{P}$'s registers and the other is held by $\mathbf{V}$, and $\mathbf{V}$ delays all choices (of which rounds are test and computation rounds) until after all $N$ rounds are complete and there is no further communication with the prover; AND
\item All classical messages sent by the verifier are random bits (i.e.~$\mathbf{V}$ no longer sends encryptions of the measurement outcomes).
\end{enumerate}
\cite{Bro18} also defines another interactive proof protocol, denoted IPS3, which is the same as IPS2, except that $\mathbf{V}$ does all the measurements for the interactive gadgets and $\mathbf{P}$ is a \emph{unitary} prover. IPS1 and IPS2 have the same completeness parameter, and the soundness parameter for IPS3 is an upper bound on the soundness parameter for IPS2, which, in turn, is the same as the soundness parameter for IPS1.

The fact that the soundness parameter of IPS3 is an upper bound in the soundness parameter for IPS1 has another useful cryptographic application: In certain contexts, it is necessary, when performing delegated quantum computation, to guard against leakage of the classical secret keys (see \cite{ArxPrLeaksFaultTol}). In obtaining the security bounds of \Cref{MainThmMultiRunStatement}, we do not need to consider this because the measurements (whose random outcomes one-time-pad the classical encryption keys) are conducted by the verifier after all of the rounds of the protocol are over. 

\subsection{Completeness and Soundness Parameters for \cite{Bro18}}

We note that any BQP algorithm $\mathcal{C}_Q=\{C_n: n\in\mathbb{N}\}$ with inherent error probability $q$ can always be transformed into an instance of Q-CIRCUIT$_q$. To see this, let $\mathcal{L}\subseteq\{0,1\}^*$ be the language corresponding to $\mathcal{C}_Q$. Given an $x\in\{0,1\}^n$, letting $n:=|x|$, deciding whether $x\in L$ with error probability $\leq q$ is equivalent to deciding whether $C_nP_x$ lies in Q-CIRCUIT$_{Y, p}$ or Q-CIRCUIT$_{N,q}$, where $P_x$ is the Pauli (consisting of $X$-gates on some wires) mapping $|0\rangle$ to computational basis state $|x\rangle$. The promise problem in \Cref{QCirProProbDef} is a natural choice for determining how the noise threshold of a multi-round run of the protocol IPS1 of \cite{Bro18} scales with the inherent error probability of the computation, because it is straightforward to slightly modify the proof of \Cref{MainThDelegQComP} to show that IPS1 provides an interactive proof protocol with constant completeness-soundness gap for any fixed inherent error probability~$q$. 

\begin{prop}\label{FixedInErrorProb} Let $\mathcal{S}$ be as in \Cref{MainThDelegQComP} and let $0<q<\frac{1}{2}$ be an arbitrary constant. A single run of IPS1 (with perfect hardware) provides an $(\mathcal{S}, c, s)$-interactive proof system for the promise problem \emph{Q-CIRCUIT}$_q$, where $c=1-\frac{q}{3}$ and $s=\frac{1-q}{3}$. In particular, we have an $\Omega(1)$ completeness-soundness gap, since 
$$c-s=\left(1-\frac{q}{3}\right)-\frac{q+2}{3}=\frac{1-2q}{3}>0$$

\end{prop}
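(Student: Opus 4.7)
The plan is to verify that both the completeness analysis and the soundness analysis from \cite{Bro18} scale cleanly with the inherent error probability $q$, rather than rederive IPS1 from scratch. The argument splits naturally along the uniform three-way choice at the top of IPS1 (computation run, $\textsf{X}$-test, $\textsf{Z}$-test, each chosen with probability $1/3$).

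For completeness, assuming an honest prover with perfect hardware, the gadget-level calculations of \cite{Bro18} guarantee that the verifier's view in each run type matches the behaviour of the target circuit on the Pauli-encrypted inputs. Hence on either type of test run the predetermined outcome is reproduced with probability $1$, and on a computation run the verifier's final bit equals the measurement of the last wire of $U|0\rangle^{\otimes n}$, which returns $0$ with probability $\mathcal{P}(U) \geq 1-q$ for a YES instance. Averaging over the three run types immediately gives $c \geq \tfrac{1}{3}(1-q) + \tfrac{2}{3} = 1 - \tfrac{q}{3}$.

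For soundness, I would invoke the IPS1 $\to$ IPS2 $\to$ IPS3 reduction chain from \cite{Bro18}, which shows that the soundness parameter of IPS3 upper bounds the soundness parameter of IPS1. Neither reduction depends on $q$: the first replaces the verifier's encrypted transmissions by shared EPR pairs and delays the verifier's measurements, while the second trades all interactive measurements for a single unitary attack by the prover. In IPS3 the prover's attack acts identically across all three round types, since the prover's view is independent of the verifier's secret choice of run. The core inequality in \cite{Bro18} does not plug in $q = 1/3$ until its final line; extracted symbolically, it takes the form
\[
\textnormal{Pr}[\mathbf{V} \text{ accepts}] \;\leq\; 1 - \frac{1 - \mathcal{P}(U)}{3},
\]
from which $\mathcal{P}(U) \leq q$ on a NO instance immediately yields $s \leq \tfrac{q+2}{3}$.

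The main obstacle is therefore not introducing any new proof technique but walking through the soundness proof of \cite{Bro18} carefully to confirm that every intermediate inequality is written in terms of $\mathcal{P}(U)$ rather than the specific numerical value $1/3$; in particular, one must check that the trade-off between the computation run and the two test runs is controlled uniformly in $\mathcal{P}(U)$. Once this is done, the two bounds combine to give $c - s = (1 - q/3) - (q+2)/3 = (1-2q)/3$, which is strictly positive, and in fact $\Omega(1)$, because the promise defining \textnormal{Q-CIRCUIT}$_q$ requires $q < 1/2$.
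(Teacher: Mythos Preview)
The paper does not actually supply a proof of this proposition; it is stated as a direct consequence of the remark immediately preceding it, namely that ``it is straightforward to slightly modify the proof of \Cref{MainThDelegQComP}'' to handle an arbitrary inherent error probability $q$. Your proposal is precisely such a straightforward modification: you trace the uniform $1/3$--$1/3$--$1/3$ split over run types, observe that the honest-prover test runs contribute $1$ and the honest computation run contributes $\mathcal{P}(U)\geq 1-q$, and then invoke the IPS1$\to$IPS2$\to$IPS3 reduction chain of \cite{Bro18} to bound soundness symbolically in $\mathcal{P}(U)$ before substituting $\mathcal{P}(U)\leq q$. This is exactly the argument the paper is gesturing at, and your sketch is correct at the level of detail appropriate here.

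One small point worth flagging: your proposal correctly derives $s\leq\frac{q+2}{3}$, which is the value used in the displayed computation of $c-s$ in the proposition, and which specialises to $s=7/9$ at $q=1/3$ in agreement with \Cref{MainThDelegQComP}. The expression ``$s=\frac{1-q}{3}$'' appearing in the body of the proposition is evidently a typo in the paper; your argument agrees with the intended value.
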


\section{The Multi-Round Run of the protocol of \cite{Bro18} as a Combinatorial Game}\label{MultiRoundSec}

In this section, we first formally define our multi-run version of the protocol of \cite{Bro18} and then the model we use to analyze it.

\subsection{IPS\texorpdfstring{$k$}{k} with abort conditions}

In a single run of IPS1, only two outcomes are possible (either $\mathbf{V}$ accepts or rejects the bit). Analogous to \cite{LMKO21}, for our multi-round run of IPS1, we allow a third possible outcome ($\mathbf{V}$ aborts the round). This is essential now that we allow noise in $\mathbf{P}$'s hardware. We cannot prevent a malicious prover from forcing us to constantly abort, but we can simultaneously bound the probability that an honest prover forces us to abort, and the probability that a potentially malicious prover convinces us of the wrong outcome without forcing us to abort. Note that our modified version of the protocol of \cite{Bro18} differs from the original in another way: It is symmetric in YES- and NO-instances of the promise problem. 

\bigskip

\begin{defn}\label{ImplemNSRUN} \emph{Let $N\geq 1$ be an integer and let $w\in [0, 1]$. Let $\mathcal{C}$ be a quantum circuit with gates drawn from $\{\textsf{X, Z, H, CNOT, T}\}$ and let $\mathcal{S}$ be as in \Cref{MainThDelegQComP}. We define a \emph{$(N, w, \mathcal{C})$-run} of IPS1 between an $\mathcal{S}$-augmented classical verifier $\mathbf{V}$ and a $\textsf{BQP}$-prover $\mathbf{P}$ with inherent error probability $q$ to consist of running IPS1 $N$ times in the following way: We produce a random partition $(S_C, S_{\textsf{X}}, S_{\textsf{Z}})$ of the set $\{1, \cdots, N\}$, by rolling a fair 3-sided die, into respective sets of computation rounds, $\textsf{X}$-test rounds, and $\textsf{Z}$-test rounds. Then,}
\begin{itemize}
\item \emph{the verifier \textit{accepts} $\mathcal{C}$ as belonging to Q-CIRCUIT$_{Y,q}$ if and only if both of 1)-2) below hold.}
\begin{enumerate}[label=\emph{\arabic*)}]
\item\emph{The verifier obtains the classical outcome $0$ in at least half of the computation rounds}
\item\emph{For each $\textsf{A}\in\{\textsf{X}, \textsf{Z}\}$, strictly less than a $w$-fraction of the $\textsf{A}$-test rounds are failed, where a test round is said to be \emph{failed} if at least one error is detected by the verifier in that round.}
\end{enumerate}
\item \emph{Likewise, the verifier \emph{rejects} $\mathcal{C}$ as belonging to Q-CIRCUIT$_{Y,q}$ if and only if both: 2) above holds but the verifier obtains the classical outcome $0$ in strictly fewer than half of the computation rounds.}
\item \emph{Lastly, if, for some $\textsf{A}\in\{\textsf{X}, \textsf{Z}
\}$, at least a $w$-fraction of the $\textsf{A}$-test-rounds is failed, then the verifier aborts.} 
\end{itemize}
\end{defn}

Note that 1) corresponds to the typical majority voting condition that is used to bound the probability of error of a protocol to within any desired threshold.  We define analogous multi-run versions of the intermediate protocols of \cite{Bro18}.

\begin{defn} \emph{For $k=2,3$, we define a \emph{$(N, w, \mathcal{C})$-run} of IPS$k$ analogously to \Cref{ImplemNSRUN}, running the auxiliary protocol IPS$k$ in place of IPS1.}\end{defn}

\subsection{The $(\alpha, w)$-avoidance game}\label{alphawAvoidSubsec}

In order to upper bound the probability of a possibly malicious prover convincing a verifier of the wrong answer in the setting of the protocol of \Cref{ImplemNSRUN}, we now describe an abstract attack model from a deviating prover against this protocol. We consider a simple combinatorial game played between two players (who we will call $\mathbf{V}$ and $\mathbf{P}$) which works as follows. Given parameters $\alpha, w\in [0, 1/2)$, we define the $(\alpha, w)$-\emph{avoidance game} to consist of the following:

\begin{enumerate}[label=G\arabic*)]
    \item $\mathbf{V}$ and $\mathbf{P}$ first agree on an integer $N\geq 1$
    \item $\mathbf{V}$ generates a subset $C$ of $[N]$ by a sequence of $N$ independent Bernoulli trials, each with success probability $\frac{1}{3}$ (say, by rolling a fair 3-sided die with one side marking the elements of $C$) and keeps this $C$ secret from $\mathbf{P}$.
    \item Now $\mathbf{P}$ chooses a subset $S\subseteq\{1, \cdots, N\}$
    \item $\mathbf{P}$ wins the game if both of the following hold:
    \begin{enumerate}[label=\roman*)]
        \item $|S\cap C|>\alpha |C|$
        \item $|S\setminus C|\leq w(N-|C|)$
    \end{enumerate}
Otherwise, $\mathbf{V}$ wins the game.
\end{enumerate}

In the setting of the $(\alpha, w)$-avoidance game, we always think of $\alpha$ as corresponding to $\frac{1-2q}{2(1-q)}\in (0, 1/2]$ for a \textsf{BQP} computation with inherent error probability $q\in [0, 1/2)$. Thus, our main task, in order to prove \Cref{MainThmMultiRunStatement}, is to show that, for $N$ sufficiently large, there exists a $w$, as a function of $\alpha$, such that the probability of $\mathbf{P}$ winning the $(\alpha, w)$-avoidance game with preselected integer $N$ is exponentially small in $N$. We prove this in the appendix in \Cref{IfWSmallLowerBound}, where we actually consider a slightly more general setting in which $\mathbf{P}$ has some source of randomness allowing them to select some elements of $S\setminus C$ ``without detection" (this corresponds to the parameter $\epsilon$ in \Cref{IfWSmallLowerBound}). 

The $(\alpha, w)$-avoidance game can be viewed as capturing the simplest possible attack model from a deviating prover against the protocol in \Cref{ImplemNSRUN}, or, more generally, against any circuit-based multi-round run of prover-verifier protocol $\mathcal{P}$ that interleaves test rounds with computation rounds. In this setting, the verifier is implementing a sequence of $N$ rounds of $\mathcal{P}$ and the prover is trying to choose a set $S\subseteq\{1, \cdots, N\}$ of rounds to corrupt so that they simultaneously corrupt a high fraction of the computation rounds and are caught deviating on few enough test rounds that the verifier does not abort the protocol. If the inherent error of the \textsf{BQP} computation is $q$, then, to convince the verifier of the wrong outcome, how many computation rounds need to be corrupted? That is, suppose we are in an instance where we are trying to decide whether a circuit $\mathcal{C}$ lies in Q-CIRCUIT$_{Y,q}$ or Q-CIRCUIT$_{N,q}$ via repetition of $\mathcal{P}$. If there are $N$ rounds in total, of which $|C|$ are computation rounds, and $N'$ of these $C$ rounds remain uncorrupted (i.e. $N'=|C\setminus S|$) then we observe the wrong outcome on at most $qN'+|C\cap S|$ rounds, so we require that
$$qN'+|C\cap S|\geq\frac{|C|}{2}$$
Equivalently,
\begin{equation}\label{EquivFrac}(1-q)|C\cap S|\geq\frac{|C|(1-2q)}{2}\end{equation}
Thus, for the malicious prover to succeed, we require that $|S\cap C|>\frac{|C|(1-2q)}{2(1-q)}$. With the definitions of \Cref{MultiRoundSec} in hand, we already have enough to formally state our main result, \Cref{MainThmMultiRunStatement}, but, before proving this result, we need to make explicit how the model of \Cref{MultiRoundSec} relates to the attack model of \cite{Bro18}, where, the in the latter setting, we are considering attacks from the prover on individual computation and test rounds (that is, we are reasoning at the level of Kraus operators).

\section{Modelling Attacks from the Prover}\label{AttackDeviationSec}

\subsection{Strategies against the $(\alpha, w)$-avoidance game for the malicious prover}

In the $(\alpha, w)$-avoidance game, it is clear that, for $N$ sufficiently large, if $\mathbf{P}^*$ is an arbitrary unbounded quantum player, then neither additional classical randomness nor the quantum hardware that $\mathbf{P}^*$ has access to can improve the probability that $\mathbf{P}^*$ wins this game beyond the bound in \Cref{IfWSmallLowerBound}, i.e.~$\mathbf{P}^*$ can do no better than simply picking an element of $2^N$ at random (which is precisely the strategy considered in \Cref{IfWSmallLowerBound}) because, firstly, any classically random strategy for $\mathbf{P}^*$, i.e. ~a probability distribution $\mathcal{P}: 2^N\rightarrow [0, 1]$, is a convex combination of strategies in which the prover picks a single random (and unknown to the verifier) subset of $\{1, \cdots, N\}$. Secondly, the game is ``one-shot", that is, except for an $N$ agreed upon in advance, there is no interaction between the players until $S$ and $C$ are revealed at the end. However, we also need to show that in the setting of \Cref{ImplemNSRUN}, any attack from an arbitrary unbounded deviating prover $\mathbf{P}^*$ has a success probability (of convincing $\mathbf{V}$ to accept the wrong answer and not abort) that is upper bounded by the probability that $\mathbf{P}^*$ wins the $(\alpha, w)$-avoidance game. This is not immediately obvious because the game is a one-shot protocol in which the players exchange no messages, whereas the protocol of \Cref{ImplemNSRUN} involves sequential exchange of $O(N)$ classical and quantum messages through which $\mathbf{P}^*$ can potentially learn about which choices $\mathbf{V}$ has made so far. In fact, however, using the auxiliary protocols discussed in \Cref{SketchSubSection}, it is clear that the security proof in \cite{Bro18} for a \emph{single} computation run can be lifted to our setting to prove that \Cref{IfWSmallLowerBound} does indeed provide an upper bound on the probability that an attack by a deviating $\mathbf{P}^*$ on the protocol of \Cref{ImplemNSRUN} succeeds. This is because we can consider soundness against an $(N, w, \mathcal{C})$-run of IPS3, and, as noted in \Cref{SketchSubSection}, in IPS3, the verifier can delay the choice of test and computation rounds until after all $N$ rounds are complete and there is no further communication with the prover, so even an unbounded quantum prover can do no better than random guessing in attempting to learn which rounds are computation rounds and which rounds are test rounds, and thus does no better than the interaction-less setting of the $(\alpha, w)$-avoidance game. We note that the above discussion also shows that the multi-round run of the protocol of \cite{Bro18} has the added advantage of simplifying the security proof over that of \cite{LMKO21}, because, in the latter setting, we need to consider the added power obtained from an attack which is entangled across rounds.

\subsection{Attacks on single rounds in \cite{Bro18}}\label{SingleRoundAttackSubsec}

In order to obtained the desired lower bound on the noise tolerance in \Cref{MainThmMultiRunStatement}, we need to be able apply \Cref{IfWSmallLowerBound} with $\epsilon=0$. Informally, in the statement of \Cref{IfWSmallLowerBound}, $\epsilon$ corresponds to the probability that, in a single run of IPS3, a general attack $\Phi$ from a deviating prover $\mathbf{P}^*$ satisfies both of the following. 

\begin{enumerate}[label=\arabic*)]
\itemsep-0.1em
    \item If $\Phi$ is executed on a test round, it is not detected by $\mathbf{V}$; AND
    \item If $\Phi$ is executed on a computation round, it results in a different output distribution on the measurement of the $n$th qubit in the computational basis. 
\end{enumerate}

We now consider the action of a general attack map on a single round of IPS2. For an integer $m\geq 1$, we let $\mathbb{P}_m$ denote the usual Pauli group on $m$ qubits, i.e.~$\mathbb{P}_m$ is the tensor product of $m$ copies of the single-qubit Pauli (phaseless) Pauli group $\mathbb{P}_1:=\{\textsf{I, X, Y, Z}\}$. If the input circuit $\mathcal{C}$ to the protocol has $n$ data qubits and $t$ $\textsf{T}$-gates, then a Pauli attack on a single run of IPS2 is an element of $\mathcal{P}_{n+2t}$. 

\begin{defn} \emph{Let $\mathcal{C}, n, t$ be as above and $m:=n+2t$. For a fixed Pauli $P\in\mathbb{P}_m$, we call $P$ \emph{benign} if the measured qubits in the protocol are acted on only by a gate in $\{\textsf{I, Z}\}$. Conversely, a Pauli $P\in\mathcal{P}_m$ is called \emph{non-benign} if at least one measured qubit in the protocol is acted on only by a gate in $\{\textsf{X, Y}\}$.}\end{defn}

The benign Pauli attack maps have the property that, when they are executed on a computation round, they do not affect the outcome distribution when the last data qubit (i.e.~non-auxiliary qubit) is measured in the computational basis. Conversely, any non-benign attack applied to a test-round will be detected by the verifier. These are both proved formally in \cite{Bro18}, that is, we have the following, obtained from Lemmas 7.8 and 7.4 respectively of \cite{Bro18}.  

\bigskip

\begin{prop}\label{Epsilon0Prot} In a single (perfect-hardware) run of IPS2 on a circuit $\mathcal{C}$ with $n$ data qubits and $t$ $\emph{\textsf{T}}$-gates, the following hold:
\begin{enumerate}[label=\arabic*)]
     \item For any benign attack $\Phi$, if $\Phi$ is executed in a computation round, then the outcome distribution of measuring the final data qubit in the computational basis is the same distribution as when the protocol is followed honestly (i.e. the identity attack is executed); AND
    \item for any non-benign attack $\Phi$, if $\Phi$ is executed on a test run, it is detected by the prover with probability~1, i.e. the test round is failed with probability~1.
\end{enumerate} \end{prop}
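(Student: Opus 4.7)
The plan is to argue that both parts follow from a careful propagation analysis of Pauli errors through a single run of IPS2, in the spirit of the stabilizer-formalism arguments used throughout \cite{Bro18}. At a high level, the structure of IPS2 is that the prover holds encrypted data and auxiliary qubits, applies gates from $\{\textsf{X, Z, H, CNOT, T}\}$ with the help of the verifier (who sends random classical bits in lieu of key updates and half-EPR pairs in lieu of auxiliary states), and then measures. A Pauli attack $P \in \mathbb{P}_{n+2t}$ inserted at any point can be commuted all the way to the end of the circuit, since Paulis commute through Clifford operations up to another Pauli (which we absorb into $P$) and through a $\textsf{T}$-gadget up to classically-tracked corrections (whose randomness is already provided by the verifier's random bits in IPS2). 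In particular, the decomposition of $P$ into tensor factors acting on ``measured qubits'' versus the rest is well-defined in the end-of-protocol picture.

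Next, I would handle part~(1). Fix a benign attack $P$, so that each factor of $P$ acting on a to-be-measured qubit is in $\{\textsf{I}, \textsf{Z}\}$. In a computation round, the final data qubit is measured in the computational basis, so a $\textsf{Z}$ on that qubit acts trivially on the Born probabilities. For intermediate measured qubits used inside $\textsf{T}$-gadgets, the outcomes are XOR'd into the classical key updates, and the key updates themselves are randomized by the verifier's random bits in IPS2 --- so a $\textsf{Z}$ that merely flips the \emph{phase} (without flipping the classical bit outcome) leaves the induced distribution on the final output bit invariant. Combining these observations, the output distribution in a computation round under a benign $P$ is identical to the honest one; this is exactly the content of Lemma~7.8 in \cite{Bro18} and I would cite that lemma after verifying that our round structure is unchanged.

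For part~(2), fix a non-benign $P$, so at least one to-be-measured qubit picks up an $\textsf{X}$ or $\textsf{Y}$ factor. In a test round, each measurement that the verifier checks has a predetermined outcome (determined by the Pauli encryption of $|0\rangle^{\otimes n}$ for an $\textsf{X}$-test, or of $|+\rangle^{\otimes n}$ for a $\textsf{Z}$-test, along with the deterministic gadget checks). An $\textsf{X}$ or $\textsf{Y}$ factor on a checked qubit deterministically flips that outcome bit (once one accounts for the phase, which is irrelevant after measurement), so the verifier detects the deviation with probability~$1$. This is Lemma~7.4 of \cite{Bro18}, and the work is just to check that our per-round IPS2 setup lines up with that lemma's statement.

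The main obstacle in carrying out this plan cleanly is the bookkeeping around the $\textsf{T}$-gadget: commuting a Pauli through a $\textsf{T}$ introduces a $\textsf{P}$-type correction whose activation depends on a gadget measurement outcome, and one must be careful that this correction never turns an $\textsf{I}$ or $\textsf{Z}$ on a measured wire into an $\textsf{X}$ or $\textsf{Y}$ (or vice versa) in a way that breaks the benign/non-benign dichotomy. The resolution, already carried out in \cite{Bro18}, is that in IPS2 the verifier's random classical bits absorb exactly these corrections, so the residual effect on the measured-wire factors of $P$ stays within $\{\textsf{I}, \textsf{Z}\}$ for benign $P$ and remains non-trivial in $\{\textsf{X}, \textsf{Y}\}$ for non-benign $P$. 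Given this, invoking Lemmas~7.4 and 7.8 of \cite{Bro18} closes both parts of the proposition.
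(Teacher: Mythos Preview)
Your proposal is correct and aligns with the paper's own treatment: the paper does not give an independent proof of this proposition but simply records it as a direct consequence of Lemmas~7.8 and~7.4 of \cite{Bro18}, which is exactly where your argument lands. Your additional Pauli-propagation narrative is accurate background for why those lemmas apply, but it goes beyond what the paper itself supplies.
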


\section{Basic Probability Notions}\label{ProbLemmaSec}

To prove \Cref{IfWSmallLowerBound} and \Cref{MainThmMultiRunStatement}, we first recall the following preliminaries.

\begin{defn} (Binomial distribution) \emph{Let $\mathcal{X}$ be a random variable taking values in $\{0, \cdots, n\}$. For $p\in [0,1]$, we write $\mathcal{X}\sim\textnormal{Binomial}(n ,p)$ if $\mathcal{X}$ corresponds to a sequence of $n$ independent trials, each of which succeeds with probability $p$, i.e. $\textnormal{Pr}(\mathcal{X}=k)$ is the probability of precisely $k$ successes.} \end{defn}

\begin{prop}\label{HoeffTailBoundsProp} (Hoeffding tail bounds) Let $\mathcal{X}$ be a random variable taking values in $\{0, \cdots n\}$, where $\mathcal{X}\sim\textnormal{Binomial}(n ,p)$. For $k\leq np$ we have
\begin{inequality}\label{HoeffdingBd1}\textnormal{Pr}[\mathcal{X}\leq k]\leq\textnormal{exp}\left(-2\frac{(np-k)^2}{n}\right)\end{inequality}
Likewise, for $k\geq np$ we have
\begin{inequality}\label{HoeffdingBd2}\textnormal{Pr}[\mathcal{X}\geq k]\leq\textnormal{exp}\left(-2\frac{(np-k)^2}{n}\right)\end{inequality}

\end{prop}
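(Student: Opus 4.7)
The plan is to derive both tail bounds by the standard Chernoff-Hoeffding exponential-moment method, since the random variable is a sum of independent bounded Bernoullis. First I would write $\mathcal{X}=\sum_{i=1}^{n}Y_i$ where the $Y_i$ are i.i.d.\ Bernoulli$(p)$ random variables taking values in $[0,1]$, and treat the two inequalities symmetrically (the lower-tail bound follows from the upper-tail bound applied to $n-\mathcal{X}\sim\textnormal{Binomial}(n,1-p)$ and the value $n-k$, so I would focus on \Cref{HoeffdingBd2}).

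For the upper tail with $k\ge np$, I would apply Markov's inequality to $e^{t\mathcal{X}}$ for an arbitrary $t>0$, giving
\begin{equation*}
\textnormal{Pr}[\mathcal{X}\ge k]\;\le\; e^{-tk}\,\mathbb{E}[e^{t\mathcal{X}}]\;=\;e^{-tk}\prod_{i=1}^{n}\mathbb{E}[e^{tY_i}].
\end{equation*}
The key estimate is Hoeffding's lemma applied to the centered variables $Y_i-p\in[-p,1-p]$, which yields $\mathbb{E}[e^{t(Y_i-p)}]\le \exp(t^{2}/8)$ because the interval has length $1$. Multiplying, $\mathbb{E}[e^{t(\mathcal{X}-np)}]\le \exp(nt^{2}/8)$, and substituting back gives $\textnormal{Pr}[\mathcal{X}\ge k]\le \exp(nt^{2}/8 - t(k-np))$. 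Finally I would optimize over $t>0$ by choosing $t=4(k-np)/n$, which produces the exponent $-2(k-np)^{2}/n$ and thus \Cref{HoeffdingBd2}. The symmetric argument (or the complement trick) yields \Cref{HoeffdingBd1}.

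The main obstacle is really only the proof of Hoeffding's lemma itself, which controls the centered moment-generating function of a bounded random variable. The standard route is convexity: for $Z\in[a,b]$ with $\mathbb{E}[Z]=0$, one writes $e^{tZ}\le\frac{b-Z}{b-a}e^{ta}+\frac{Z-a}{b-a}e^{tb}$ by convexity of the exponential, takes expectations to get $\mathbb{E}[e^{tZ}]\le e^{\varphi(u)}$ with $u=t(b-a)$ and $\varphi(u)=-\gamma u+\log(1-\gamma+\gamma e^{u})$ for $\gamma=-a/(b-a)$, and then bounds $\varphi(u)\le u^{2}/8$ by a second-order Taylor expansion since $\varphi(0)=\varphi'(0)=0$ and $\varphi''(u)\le 1/4$. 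I would include this derivation as a short inline lemma, since it is the only non-routine ingredient and the rest of the proof is bookkeeping with Markov's inequality and an explicit minimization over $t$. Given that the result is entirely standard, I would also note that it could equivalently be cited directly from Hoeffding's original paper, but including a self-contained derivation makes the dependency on the bounded-increment constant $1$ (which is what produces the factor of $2$ in the exponent) completely transparent.
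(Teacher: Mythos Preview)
Your derivation is correct and self-contained. Note, however, that the paper does not actually prove \Cref{HoeffTailBoundsProp}: it is stated as a standard result under the name ``Hoeffding tail bounds'' and the only argument given is the one-line observation that \cref{HoeffdingBd2} follows from \cref{HoeffdingBd1} via the symmetry $F(n-k;n,1-p)=\textnormal{Pr}[\mathcal{X}\ge k]$, which is the same complement trick you mention. So your proposal is not so much a different route as a full proof where the paper simply cites the bound; the exponential-moment argument with Hoeffding's lemma that you outline is exactly the standard justification, and including it would make the section self-contained at the cost of a few extra lines.
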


Note that, if $F(k; n, p)=\textnormal{Pr}[\mathcal{X}\leq k]$ denotes the cumulative distribution function for $n$ trials with success probability $p$, then $F(n-k; n, 1-p)=\textnormal{Pr}[\mathcal{X}\geq k]$, so, \cref{HoeffdingBd2} actually just follows from \cref{HoeffdingBd1}. 

\begin{defn} (Hypergeometric distribution) \emph{Given an integer $N\geq 1$ and integers $0\leq K\leq N$ and $0\leq n\leq N$, and a random variable $\mathcal{X}$ taking values in $\{0, \cdots, n\}$, we write $\mathcal{X}\sim\textnormal{Hypergeometric}(N, K, n)$ if, for each $k=0, \cdots, n$, $\textnormal{Pr}[\mathcal{X}=k]$ is the probability of obtaining $k$ successes out of $n$ draws without replacement from a bag with $N$ items, of which $K$ are marked, where the marked items are successes.} \end{defn}

Analogous to \Cref{HoeffTailBoundsProp}, we have the following bounds on the upper and lower tail of the hypergeometric distribution.

\begin{prop}\label{HypergeoTailBoundsProp} Let $\mathcal{X}$ be a random variable with $\mathcal{X}\sim\textnormal{Hypergeometric}(N, K, n)$. Letting $p=K/N$ and $0<t<K/N$, we have
\begin{inequality}\label{HypergeoTail}\textnormal{Pr}[\mathcal{X}\leq (p-t)n]\leq e^{-2t^2n}\end{inequality}
\begin{inequality}\label{HypergeoHead}\textnormal{Pr}[\mathcal{X}\geq (p+t)n]\leq e^{-2t^2n}\end{inequality}
 \end{prop}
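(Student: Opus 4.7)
The plan is to reduce the hypergeometric case to the binomial case and then invoke \Cref{HoeffTailBoundsProp}. First I would write $\mathcal{X}=\sum_{i=1}^n X_i$, where $X_i\in\{0,1\}$ is the indicator that the $i$th item drawn (without replacement) from the pool of $N$ is marked. Each $X_i$ has mean $p=K/N$, even though the $X_i$ are not independent. The decisive ingredient is the classical observation of Hoeffding that for every continuous convex function $\phi:\mathbb{R}\to\mathbb{R}$,
\[
E[\phi(\mathcal{X})]\leq E[\phi(\mathcal{Y})],
\]
where $\mathcal{Y}\sim\textnormal{Binomial}(n,p)$; equivalently, the without-replacement sum is dominated in the convex order by the with-replacement sum.

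Next I would apply this with $\phi(x)=e^{sx}$ for arbitrary $s>0$, yielding $E[e^{s\mathcal{X}}]\leq E[e^{s\mathcal{Y}}]$. A standard Chernoff/Markov argument then gives
\[
\Pr[\mathcal{X}\geq (p+t)n]\leq e^{-s(p+t)n}\,E[e^{s\mathcal{X}}]\leq e^{-s(p+t)n}\,E[e^{s\mathcal{Y}}],
\]
and optimizing the right-hand side over $s>0$ produces exactly $e^{-2t^2n}$, as this is precisely the calculation underlying \cref{HoeffdingBd2}. This establishes \cref{HypergeoHead}.

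For the lower tail \cref{HypergeoTail}, I would use the same symmetry trick that converts \cref{HoeffdingBd1} into \cref{HoeffdingBd2}. If $\mathcal{X}\sim\textnormal{Hypergeometric}(N,K,n)$, then the complement $n-\mathcal{X}\sim\textnormal{Hypergeometric}(N,N-K,n)$, with mean $(1-p)n$. Applying the upper-tail bound just proved to $n-\mathcal{X}$ with parameter $1-p$ and the same deviation $t$ gives $\Pr[n-\mathcal{X}\geq (1-p+t)n]\leq e^{-2t^2n}$, which is equivalent to $\Pr[\mathcal{X}\leq (p-t)n]\leq e^{-2t^2n}$, as required.

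The main obstacle is the convex-order reduction from sampling without replacement to sampling with replacement; this is the one step that is genuinely different from the binomial proof. I would either cite Hoeffding's original treatment or, if a self-contained derivation is preferred, sketch the short coupling argument: represent the without-replacement sample as an average over uniformly random subsets of a with-replacement sample and invoke Jensen's inequality on $\phi$. Once this reduction is in place, the remainder of the argument is formally identical to the derivation of \Cref{HoeffTailBoundsProp}, so no new calculations are needed.
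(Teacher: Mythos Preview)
The paper does not actually supply a proof of \Cref{HypergeoTailBoundsProp}; it is stated without argument as a standard concentration bound, ``analogous to \Cref{HoeffTailBoundsProp},'' and is then used as a black box to derive \Cref{TailBdCor1} and \Cref{TailBdCor2}. So there is no paper proof to compare against.

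Your proposed argument is the classical one and is correct. The convex-order domination of the without-replacement sum by the with-replacement sum (Hoeffding, 1963) transfers the binomial moment-generating-function bound to the hypergeometric setting, and the Chernoff optimization over $s>0$ then reproduces the constant $2$ in the exponent exactly as in the binomial case. The symmetry reduction $n-\mathcal{X}\sim\textnormal{Hypergeometric}(N,N-K,n)$ for the lower tail is also correct. The only point worth tightening is the one-line sketch of the convex-order step: the phrasing ``represent the without-replacement sample as an average over uniformly random subsets of a with-replacement sample'' is not quite the standard coupling, so in a self-contained write-up you would either give Hoeffding's original finite-population argument or simply cite it. With that caveat, your proposal would fill in precisely what the paper leaves implicit.
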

 
The bounds of \Cref{HypergeoTailBoundsProp} have the following two corollaries. Note that, for $\mathcal{X}\sim\textnormal{Hypergeometric}(N, K, n)$, the distribution is centered on $\frac{nK}{N}$, and, as we get farther from this mean, the distribution tails off, as the following two corollaries make clear.

\begin{cor}\label{TailBdCor1} Let $\mathcal{X}\sim\textnormal{Hypergeometric}(N, K, n)$ and let $0<\lambda<\frac{nK}{N}$. Then
$$\textnormal{Pr}\left[\mathcal{X}\leq\lambda\right]\leq\textnormal{exp}\left(-2n\left(\frac{K}{N}-\frac{\lambda}{n}\right)^2\right)$$
\end{cor}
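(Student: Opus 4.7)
The plan is to derive this corollary as an immediate specialization of inequality \eqref{HypergeoTail} from \Cref{HypergeoTailBoundsProp}. Set $p := K/N$, which is exactly the proportion parameter used in \Cref{HypergeoTailBoundsProp}. Then I choose the free parameter $t$ so that $(p-t)n$ matches $\lambda$, namely
\[
t := \frac{K}{N} - \frac{\lambda}{n}.
\]

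Next I verify that this $t$ satisfies the hypothesis $0 < t < K/N$ required by \Cref{HypergeoTailBoundsProp}. The upper bound $t < K/N$ holds because $\lambda > 0$ implies $\lambda/n > 0$, so $t = K/N - \lambda/n < K/N$. The lower bound $t > 0$ holds because the hypothesis $\lambda < nK/N$ rearranges to $\lambda/n < K/N$, i.e.\ $K/N - \lambda/n > 0$. Both conditions are therefore met.

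With this choice, $(p-t)n = \lambda$ exactly, so \eqref{HypergeoTail} yields
\[
\textnormal{Pr}[\mathcal{X}\leq \lambda] = \textnormal{Pr}[\mathcal{X}\leq (p-t)n] \leq e^{-2t^2 n} = \textnormal{exp}\!\left(-2n\!\left(\frac{K}{N}-\frac{\lambda}{n}\right)^2\right),
\]
which is precisely the desired inequality. Since the proof is just a direct substitution into a previously stated bound, there is no genuine obstacle; the only thing to be careful about is checking that the hypothesis on $\lambda$ translates correctly into the hypothesis on $t$, which is the verification I carried out above.
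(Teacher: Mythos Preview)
Your proposal is correct and matches the paper's approach exactly: the paper states this result as an immediate corollary of \Cref{HypergeoTailBoundsProp} without giving an explicit proof, and the substitution $t = K/N - \lambda/n$ you carry out is precisely the intended one-line derivation.
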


Likewise

\begin{cor}\label{TailBdCor2}  Let $\mathcal{X}\sim\textnormal{Hypergeometric}(N, K, n)$ and let $\lambda>\frac{nK}{N}$. Then 
$$\textnormal{Pr}\left[\mathcal{X}\geq\lambda\right]\leq\textnormal{exp}\left(-2n\left(\frac{\lambda}{n}-\frac{K}{N}\right)^2\right)$$\end{cor}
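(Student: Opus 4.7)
The plan is to reduce this corollary to a direct application of \cref{HypergeoHead} of \Cref{HypergeoTailBoundsProp} by making an appropriate choice for the deviation parameter $t$. The bound in \cref{HypergeoHead} is phrased as a tail probability relative to the mean $pn = nK/N$, while the corollary is phrased in terms of an absolute threshold $\lambda$, so the only work is to match the two parameterizations.

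First I would set $p := K/N$ (matching the notation of \Cref{HypergeoTailBoundsProp}) and introduce $t := \lambda/n - K/N$. The hypothesis $\lambda > nK/N$ immediately gives $t > 0$, which is what is required to invoke the upper-tail bound. Next I would verify the identity $(p+t)n = (K/N)n + (\lambda/n - K/N)n = \lambda$, so that $\textnormal{Pr}[\mathcal{X} \geq \lambda]$ is literally $\textnormal{Pr}[\mathcal{X} \geq (p+t)n]$. Applying \cref{HypergeoHead} then yields an upper bound of $e^{-2t^2 n}$, and substituting $t = \lambda/n - K/N$ back into the exponent produces exactly the target expression
$$\textnormal{exp}\left(-2n\left(\frac{\lambda}{n} - \frac{K}{N}\right)^2\right).$$

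There is no genuine obstacle here; the result is a bookkeeping reformulation of \cref{HypergeoHead} that re-expresses the bound in terms of an absolute threshold rather than a relative deviation from the mean, and is entirely analogous to the way \Cref{TailBdCor1} is obtained from \cref{HypergeoTail}.
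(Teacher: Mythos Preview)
Your proposal is correct and is exactly the intended derivation: the paper states the corollary without proof, as an immediate reparametrization of \cref{HypergeoHead} via the substitution $t=\lambda/n-K/N$. One cosmetic point worth noting is that \Cref{HypergeoTailBoundsProp} as stated also carries the hypothesis $t<K/N$, which you do not verify and which need not hold for arbitrary $\lambda>nK/N$; however, this restriction is inessential for the upper-tail inequality, which is valid for all $t>0$, so there is no genuine gap.
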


\section{Proof of the Main Result: A Noise-Robust Protocol Obtained from IPS1}\label{NoiseResProtocol}

We can now state and prove our main result. Below, in our statement of conditions on $p^{\textnormal{noise}}, w$, we have suppressed the three arguments of the functions $f,g$, as well as the arguments of $f$ in the definition of $g$, for clarity. 

\bigskip

\begin{theorem}\label{MainThmMultiRunStatement} (Noise-Robust Many-Round Implementation of IPS1 with Majority Voting) There exist functions $f:\mathbb{N}\times [0, 1/2)\times [0,1/2)\rightarrow (0,1)$ and $g:\mathbb{N}\times [0, 1/2)\times [0,1/2)\rightarrow (0, 1)$ such that $$f(N, \alpha, \delta)=1-O\left(\frac{\sqrt{\log(2/\delta)}}{\alpha\sqrt{N}}\right)$$ and $$g(N, \alpha, \delta)=\Theta\left(\frac{\sqrt{\log(2/\delta)}}{\alpha\sqrt{N}}\right)$$ and such that the following holds: Let $\mathcal{S}$ be as in \Cref{MainThDelegQComP}. Let $\mathbf{V}$ be an $\mathcal{S}$-augmented classical polytime machine. Let $\mathcal{C}$ be an $n$-qubit circuit with gates drawn from $\{\emph{\textsf{X, Z, H, CNOT, T}}\}$. Let $\mathbf{P}$ be a possibly noisy polynomial-time quantum computer with inherent error probability $q\in [0, 1/2)$ and let $p^{\textnormal{noise}}$ denote the probability that $\mathbf{P}$, behaving honestly, fails a given single $\emph{\textsf{X}}$- or $\emph{\textsf{Z}}$-test round of IPS1. Let $\delta\in [0, 1/2)$ and $\alpha:=\frac{2q-1}{2q-2}$ and suppose we choose an $N$ sufficiently large such that both $f\geq 9/10$ and $p^{\textnormal{noise}}<f\alpha-g$. In that case, there exists a $w\in (0, 1]$, depending on $N, \alpha, \delta$ such that any $\left(N, w, \mathcal{C}\right)$-run of IPS1 satisfies both of 1) and 2) below. In particular, we take $w=f\alpha$.
\begin{enumerate}[label=\arabic*)]
\item There is a probability of at most $\delta$ that \emph{both} A) and B) occur simultaneously: 
\begin{enumerate}[label=\Alph*)]
    \item The verifier does not abort; AND
    \item Either $\mathcal{C}$ belongs to \emph{Q-CIRCUIT}$_{Y, q}$ but the verifier rejects it \emph{OR} $\mathcal{C}$ belongs to \emph{Q-CIRCUIT}$_{N, q}$ but the verifier accepts it. 
\end{enumerate}
\item  If the prover is honest, then the probability that the verifier aborts is at most $\delta$. 
\end{enumerate}
In particular, the probability of A) and B) occurring simultaneously is at most $2e^{-\Omega(N)}$.
\end{theorem}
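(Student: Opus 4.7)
The plan is to treat the two items of \Cref{MainThmMultiRunStatement} separately. Item 1) is a soundness statement that I would reduce to the $(\alpha,w)$-avoidance game of \Cref{alphawAvoidSubsec} and then close via the abstract probability lemma \Cref{IfWSmallLowerBound}. Item 2), which concerns an honest noisy prover, should fall out of Hoeffding's inequality (\Cref{HoeffTailBoundsProp}) applied to the test-round failure counts and to the majority vote on computation rounds.

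For item 1), I would first invoke the reduction outlined in \Cref{SketchSubSection} and \Cref{AttackDeviationSec}, replacing the $(N,w,\mathcal{C})$-run of IPS1 by the corresponding $(N,w,\mathcal{C})$-run of IPS3. Since IPS3 lets the verifier delay the partition $(S_C,S_{\mathsf{X}},S_{\mathsf{Z}})$, all encryption keys, and every internal measurement until after the last quantum message has been exchanged, an arbitrary prover strategy is informationally equivalent to a single attack map. Decomposing that map in the Pauli basis across all $N$ rounds and applying \Cref{Epsilon0Prot}, each Pauli branch $P=P_1\otimes\cdots\otimes P_N$ defines a set $S=\{i:P_i\text{ is non-benign}\}$; item 2) of that proposition ensures that non-benign Paulis on test rounds are detected with probability $1$ (so $\epsilon=0$ in \Cref{IfWSmallLowerBound}), while item 1) combined with \cref{EquivFrac} shows that flipping the majority vote requires $|S\cap S_C|>\alpha|S_C|$. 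I would then apply \Cref{IfWSmallLowerBound} with this $\alpha$, with $\epsilon=0$, and with $w=f\alpha$, obtaining the claimed $\delta$ and $2e^{-\Omega(N)}$ bounds on the joint event A)$\wedge$B).

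For item 2), the honest noisy prover fails each test round independently with probability at most $p^{\textnormal{noise}}$, so the number of failed $\mathsf{A}$-test rounds is stochastically dominated by $\mathrm{Binomial}(|S_{\mathsf{A}}|,p^{\textnormal{noise}})$. By \cref{HoeffdingBd2} and the hypothesis $w-p^{\textnormal{noise}}>g=\Theta(\sqrt{\log(2/\delta)}/(\alpha\sqrt{N}))$, the probability that the number of failures exceeds $w|S_{\mathsf{A}}|$ is at most $\exp(-2g^{2}|S_{\mathsf{A}}|)$. A Chernoff bound on the trinomial sampling in \Cref{ImplemNSRUN} shows $|S_{\mathsf{A}}|=\Theta(N)$ except on a small-probability event, and union-bounding over the two test types yields the required abort probability $\le\delta$. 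The analogous Hoeffding argument, using $q<1/2$ and the benign-attack invariance in item 1) of \Cref{Epsilon0Prot}, handles the honest contribution to A)$\wedge$B) in item 1) as well.

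The main obstacle is \Cref{IfWSmallLowerBound} itself. The intuition is that once the prover commits to a set $S$ independently of the random $C$ of density $1/3$, the overlap $|S\cap C|$ behaves hypergeometrically with mean $|S|/3$, so simultaneously achieving $|S\cap C|/|C|>\alpha$ and $|S\setminus C|/(N-|C|)\le f\alpha$ pushes against both tails of that distribution. The delicate quantitative point is obtaining the $f=1-O(\sqrt{\log(2/\delta)}/(\alpha\sqrt{N}))$ scaling stated in the theorem; the plan is to apply the two-sided hypergeometric bounds of \Cref{HypergeoTailBoundsProp} at each fixed value of $|S|$ and then union-bound over the at most $N+1$ possible values of $|S|$, absorbing the resulting $\log N$ factor into $\log(2/\delta)$. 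Once that lemma is in hand, the rest of the theorem is routine tail-bound bookkeeping.
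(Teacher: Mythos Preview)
Your high-level decomposition matches the paper exactly: reduce item~1) to an $(N,w,\mathcal{C})$-run of IPS3, decompose a general attack into Paulis, invoke \Cref{Epsilon0Prot} to justify $\epsilon=0$, and then cite \Cref{IfWSmallLowerBound}; handle item~2) by a Hoeffding bound on the binomial count of failed test rounds, combined with a concentration step guaranteeing $|S_{\mathsf A}|=\Theta(N)$, then a union bound over $\mathsf A\in\{\mathsf X,\mathsf Z\}$. The paper carries this out with the explicit choices $A=\alpha^{2}N/(6\log(2/\delta))$, $f=A'(1-A^{-1/2})$, $g=A^{-1/2}/2$, which recover the stated asymptotics.

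The one place your sketch diverges is in the plan for \Cref{IfWSmallLowerBound}. You propose to apply the hypergeometric tail bounds ``at each fixed value of $|S|$'' and then union-bound over the $N+1$ possibilities, absorbing a $\log N$ into $\log(2/\delta)$. But in the lemma $S$ is \emph{fixed}: the only randomness is in $C$ and in the $\epsilon$-coins, so $|S|$ is a single deterministic number and there is nothing to union-bound over (and hence no $\log N$ to absorb, which matters since the theorem statement has no such factor). The paper's proof of the lemma instead does a single threshold case split on whether $|S|/N\le A'\alpha$ or $|S|/N>A'\alpha$: in the first case the hypergeometric head bound (\Cref{TailBdCor2}) shows condition~a) fails with high probability; in the second case \Cref{TailBdCor1} forces $|S\setminus C|$ to be large, and then a Hoeffding bound on the $\epsilon$-coins forces condition~b) to fail. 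Your hypergeometric-tail intuition is exactly right; it is just organized as a deterministic case split on the one fixed $|S|$ rather than an enumeration. Once you make that adjustment, the rest of your outline goes through as in the paper.
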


\begin{proof} Let $p:=p^{\textnormal{noise}}$ and let $(S_C, S_{\textsf{X}}, S_{\textsf{Z}})$ be a random partition of $\{1, \cdots, N\}$ obtained by  rolling a fair three-sided die $N$ times. Following \Cref{AttackDeviationSec}, it suffices to consider an $(N, w, \mathcal{C})$-run of IPS3. If we obtain the desired probability bounds on each of the events in 1) and 2) for a $(N, w, \mathcal{C})$-run of IPS3, then we also obtain the same upper bounds on these events for a $(N, w, \mathcal{C})$-run of IPS1, since:
\begin{itemize}
\item A single run of IPS3 has the same completeness parameter as IPS1; AND
\item An upper bound on the soundness parameter for a single run of IPS3 is an upper bound on the soundness parameter for a single run of IPS2 and IPS1. 
\end{itemize}

We now definite intermediate variables $A, A'$ of, corresponding to their usage in \Cref{IfWSmallLowerBound}, as follows:
$$A:=\frac{\alpha^2N}{6\log(2/\delta)}$$
$$A':=\frac{\left(2+\frac{3}{A}\right)-\sqrt{\left(2+\frac{3}{A}\right)^2-4}}{2}$$
We then set $f(N, \alpha, \delta):=A'(1-A^{-1/2})$ and $g(N, \alpha, \delta):=A^{-1/2}/2$.

\bigskip

Note that $A'=1-O\left(1/A\right)$, so we indeed have $f(N, \alpha, \delta)=1-O\left(\frac{\sqrt{\log(2/\delta)}}{\alpha\sqrt{N}}\right)$.

\bigskip

We first show that, if the prover is honest, then the probability that the verifier aborts is at most $\delta$.

\begin{Claim}\label{FirstIntermediateCompleteness} If the prover is honest, then, for either type of the test rounds, the probability of failing at least a $w$-fraction of these rounds is at most $\delta/2$. \end{Claim}

\begin{claimproof} Choose one of the two types of these rounds and let $S\in\{S_{\textsf{X}}, S_{\textsf{Z}}\}$ be the corresponding set of rounds. Let $\mathcal{X}$ be a random variable taking values in $\{0, \cdots, |S|\}$, where $\textnormal{Pr}(\mathcal{X}=k)$ is the probability of passing $k$ of the $|S|$ test rounds. If the prover is honest, then any given round is passed with probability $1-p$. Thus, $\mathcal{X}\sim\textnormal{Binomial}(|S|, 1-p)$. We need to show that $\textnormal{Pr}(\mathcal{X}\leq |S|(1-w))\leq\frac{\delta}{2}$.

\bigskip

\textbf{Remark}: To apply the Hoeffding Bound, we want $|S|(1-w)\leq |S|(1-p)$. Actually, we need the gap between $(1-w)$ and $1-p$ to be bounded from below. This is why we have the lower bound on $w$ in the statement of \Cref{MainThmMultiRunStatement}. The upper bound on $w$ guarantees that the probability of simultaneously passing many test rounds and corrupting many computation rounds is low. The lower bound on $w$ guarantees that the probability that at least a $w$-fraction of test rounds is failed is low. Now, let $\beta:=(1-p)-(1-w)=w-p$. Our bound on $p$ implies that $|S|(1-w)\leq |S|(1-p)$, so, applying \cref{HoeffdingBd1}, we get
\begin{inequality}\label{HBoundCompl12}\textnormal{Pr}[\mathcal{X}\leq |S|(1-w)]\leq\textnormal{exp}\left(-2|S|\cdot [w-p]^2\right)\leq\textnormal{exp}\left(-2|S|\beta^2\right)\end{inequality}

Now, we need to show that $|S|$ is high with high probability. For each $t\in [0,1]$, let $E_t$ denote the event that $|S|>Nt$ and let $q_t$ denote the probability that $E_s$ occurs. Since each $q_t$ lies in $[0,1]$, it follows from \cref{HBoundCompl12} that

$$\textnormal{Pr}[\mathcal{X}\leq |S|(1-w)]\leq\min_{t\in [0,1]} \left(e^{-Nt\beta^2}+(1-q_t)\right)$$

We now bound $1-q_t$ from above for $t$ within a certain interval. We regard the generation of $S$ as a sequence of $N$ independent Bernoulli trials, each with a probability of success of $\frac{1}{3}$. It follows from \cref{HoeffdingBd1} that, if $t\in [0, 1/3]$, then there is a probability of at most $e^{-2((N/3)-tN)^2/N}$ that $E_t$ does not occur. Thus, we obtain 
$$\textnormal{Pr}[\mathcal{X}\leq |S|(1-w)]\leq\min_{t\in [0,1/3]} \left(e^{-2Nt\beta^2}+e^{-2N(t-1/3)^2}\right)$$
Taking the midpoint $t=1/6$, we get
\begin{inequality}\label{BoundGap1-w}\textnormal{Pr}[\mathcal{X}\leq |S|(1-w)]\leq e^{-N\beta^2/3}+e^{-N/18}\end{inequality}
The dominating term in \cref{BoundGap1-w} is $e^{-N\beta^2/3}$. Now, our bound on $p$ and choice of $w$ imply that $\beta^2\geq\frac{1}{4A}$. Furthermore, since $\alpha\in [0, 1/2)$, we get $\frac{N\beta^2}{3}\geq 2\log(2/\delta)$. We have $N/18\geq 2\log(2/\delta)$ as well, so we get
$$e^{-N\beta^2/3}+e^{-N/18}\leq 2\left(\frac{\delta}{2}\right)^2<\frac{\delta}{2}$$
so we are done. \end{claimproof}

\bigskip

We now finish the proof of \Cref{MainThmMultiRunStatement}. It follows from \Cref{FirstIntermediateCompleteness} that, if the prover is honest, then there is a probability of at most $\delta$ that the verifier aborts. Thus, 2) of \Cref{MainThmMultiRunStatement} holds. Now we prove 1). Suppose that both A) and B) occur. Thus, the verifier does not abort, and precisely one of the following holds:
\begin{enumerate}[label=\roman*)]
    \item Our circuit is a YES-instance of Q-CIRCUIT$_q$ but the verifier obtains the classical outcome $0$ in strictly fewer than half of the computation rounds; OR 
    \item Our circuit is a NO-instance of Q-CIRCUIT$_q$ but the verifier obtains the classical outcome $0$ in at least half of the computation rounds. 
\end{enumerate}

A single computation run without any honest noise or attacks from the prover will produce an output bit with the same distribution as measuring the last register of $\mathcal{C}|0\rangle^{\otimes n}$. If $N'$ is the number of computation rounds with no honest noise or attacks from the prover, then, since one of i)-ii) holds, as in \cref{EquivFrac}, we have
$$|S_C|-N'\leq\frac{|S_C|(2q-1)}{2q-2}$$
On the other hand, since the verifier does not abort, it necessarily holds that at most a $w$-fraction of the $(N-|S_C|)$ test rounds have errors which are detected by the verifier. By  \cref{Epsilon0Prot}, we can apply \Cref{IfWSmallLowerBound} with $\epsilon=0$, and it follows that there is a probability of at most $\delta$ that both of A)-B) of 1) occur, so we are done. This proves \Cref{MainThmMultiRunStatement}. \end{proof} 

\section{Concluding Remarks and Open Questions}

We have shown that, in the specified error model, an $N$-run protocol that interleaves test rounds and computation rounds can achieve a noise-tolerance of $\left(1-O\left(\frac{1}{\sqrt{N}}\right)\right)\frac{2q-1}{2q-2}$ for a polynomial-time quantum server with inherent error probability $q$. In this work, we have considered a very simple error model for the interactive proof setting, so a natural follow-up line of investigation consists of extending this to a more complex error model. In particular, in the circuit model, a delegated computation between an almost-classical verifier and a quantum server is subject to noise that arises from  the fidelity of the prover's gates, the auxiliary qubits prepared and sent by the client, and the quantum channel for sending these qubits. A client may wish to know how much noise in each of these individual components is tolerable, and how many rounds of a repeated interactive protocol they need in order to obtain a desired upper-bound on the probability of the unwanted outcome occurring in 1) of \Cref{MainThmMultiRunStatement}. Some recent work (\cite{ArxPrLeaksFaultTol}) has addressed this question for errors on the verifier side in the MBQC model, where noise may leak the classical secret control bits used by the verifier.

\section{Acknowledgements} 
We thank Peter Drmota and Dominik Leichtle for very interesting discussions in the early and late stages of this work respectively. We acknowledge funding support under the Digital Horizon Europe project FoQaCiA, Foundations of quantum computational advantage, grant no.~101070558, together with the support of the Natural Sciences and Engineering Research Council of Canada (NSERC)(ALLRP-578455-2022).

\bibliographystyle{alphaarxiv.bst}
\bibliography{quasar-full.bib,  quasar-abrv.bib, 
quasar.bib, quasar-more-noisy-verification.bib}

\begin{appendix}
\counterwithout{equation}{section}
\addtocounter{equation}{0}
\section{The proof of the main probability lemma}

\begin{lemma}\label{IfWSmallLowerBound} Let $C$ be a random subset of $\{1, \cdots, N\}$ which is generated by a sequence of $N$ independent Bernoulli trials, each with a $\frac{1}{3}$-probability of success. Fix real parameters $\epsilon\in [0,1]$ and $\alpha, \delta\in [0, 1/2)$ and $A\geq 100$. Fix a subset $S$ of $\{1, \cdots, N\}$ and suppose that, for each $x\in S$, we flip an $\epsilon$-coin. For each $x\in S$, let $\mathcal{Y}_x$ be a random variable corresponding to the coin flip, that is $\textnormal{Pr}(\mathcal{Y}_x=0)=\epsilon$ and $\textnormal{Pr}(\mathcal{Y}_x=1)=1-\epsilon$. Then, as long as the following two bounds are satisfied:
\begin{inequality}\label{BoundonNEpsAlphDelt} N\geq\frac{6A\log(2/\delta)}{(1-\epsilon)^2\alpha^2}\end{inequality}
\begin{inequality}\label{BoundOnWEpsALphDelt} w\leq \left(1-A^{-1/2}\right)A'(1-\epsilon)\alpha \end{inequality}
where $$A':=\frac{\left(2+\frac{3}{A}\right)-\sqrt{\left(2+\frac{3}{A}\right)^2-4}}{2}$$
the probability that both of the following hold simultaneously is bounded from above by $\delta$.
\begin{enumerate}[label=\alph*)]
\item $|S\cap C|>\alpha|C|$
\item $\{x\in S\setminus C:\mathcal{Y}_x=1\}|\leq w(N-|C|)$
\end{enumerate}
\end{lemma}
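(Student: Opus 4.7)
Fix $S \subseteq \{1,\ldots,N\}$ and write $s := |S|$; all probability is over the random $C$ and the coins $\{\mathcal{Y}_x\}_{x \in S}$. The argument is a case split on $s$ against a threshold $\bar{s} = \beta N$ (with $\beta$ to be chosen), motivated by the observation that the two bad events pull $s$ in opposite directions: event (a) is unlikely when $s$ is much smaller than $\alpha N$ (since $|S \cap C|$ concentrates around $s/3 < \alpha|C|$), while event (b) is unlikely when $s$ is much larger than $wN/(1-\epsilon)$ (since the number of coins $\mathcal Y_x = 1$ among $x \in S \setminus C$ concentrates around $\tfrac{2}{3}(1-\epsilon)s$, which easily exceeds the threshold $w(N-|C|)$). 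The hypothesis \eqref{BoundOnWEpsALphDelt} on $w$ makes these two $s$-regimes incompatible up to fluctuations of order $\sqrt{\log(1/\delta)/N}$, which is exactly the slack the lower bound \eqref{BoundonNEpsAlphDelt} on $N$ provides.

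\textbf{The two cases.} By \Cref{HoeffdingBd1} applied to $|C| \sim \mathrm{Binomial}(N,1/3)$, there exists $t_1 = \Theta(\sqrt{N\log(1/\delta)})$ such that the event $G_1 := \{|C| \geq N/3 - t_1\}$ has $\Pr[G_1^c] \leq \delta/3$; on $G_1$, $N - |C| \leq 2N/3 + t_1$. For $s \leq \bar s$, condition on $|C| = c$: then $X := |S \cap C| \sim \mathrm{Hypergeometric}(N,s,c)$ has mean $sc/N \leq \beta c$, so \Cref{TailBdCor2} gives $\Pr[(a) \mid |C|=c] \leq \exp(-2c(\alpha-\beta)^2)$, and on $G_1$ this yields $\Pr[(a) \cap G_1] \leq \delta/3$ provided $(\alpha - \beta)^2$ is at least an explicit multiple of $\log(1/\delta)/N$. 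For $s > \bar s$, observe that $Y' := |\{x \in S \setminus C : \mathcal Y_x = 1\}|$ is $\mathrm{Binomial}(s, \tfrac{2}{3}(1-\epsilon))$, because each $x \in S$ independently lies outside $C$ with probability $2/3$ and satisfies $\mathcal Y_x = 1$ with probability $1-\epsilon$; the deterministic inclusion $\{(b)\} \cap G_1 \subseteq \{Y' \leq w(2N/3+t_1)\}$ together with \Cref{HoeffdingBd1} applied to $Y'$ gives
\[ \Pr[(b) \cap G_1] \leq \exp\!\left(-\frac{2\bigl(\tfrac{2s(1-\epsilon)}{3} - w(\tfrac{2N}{3}+t_1)\bigr)^2}{s}\right), \]
and a direct calculation shows this right-hand side is non-increasing in $s$ on $[\bar s, N]$ (the signal is linear in $s$ while the Hoeffding variance is just $s$), so the worst case is $s = \bar s$, yielding $\Pr[(b) \cap G_1] \leq \delta/3$ provided $((1-\epsilon)\beta - w)^2$ is at least an explicit multiple of $\beta \log(1/\delta)/N$.

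\textbf{Optimization and main obstacle.} We must exhibit $\beta$ satisfying both constraints above with $w$ as large as possible. Substituting $u := \beta/\alpha$ and $v := w/((1-\epsilon)\alpha)$ and using $A = \alpha^2 N /(6\log(2/\delta))$, the constraints become (to leading order) $(1-u)^2 \gtrsim 1/A$ and $(u-v)^2 \gtrsim u/A$; eliminating $u$ turns these into a single quadratic inequality in $v$ whose threshold is exactly the smaller root of $x^2 - (2 + 3/A)x + 1 = 0$, namely $A'$, and the prefactor $(1 - A^{-1/2})$ absorbs the lower-order terms ($wt_1$ and the slack $N/3 - t_1$ versus $N/3$ from the typical-$|C|$ step). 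A union bound $\Pr[(a) \cap (b)] \leq \Pr[G_1^c] + \Pr[(a) \cap G_1] + \Pr[(b) \cap G_1] \leq \delta$ then concludes the proof. The main obstacle is this optimization step: the $\sqrt{\beta}$ factor on the right of the Case-(ii) constraint (originating in the Hoeffding variance scaling like $s = \beta N$ rather than a constant) is precisely what forces a quadratic rather than linear relation between $u$ and $v$ and produces the closed-form $A'$, so matching every multiplicative constant through the two cases requires careful bookkeeping. Verifying the monotonicity-in-$s$ claim in Case (ii) — which holds precisely because $\beta > w/(1-\epsilon)$ is guaranteed by the hypothesis on $w$ — is short but essential to pin the worst case at $s = \bar s$.
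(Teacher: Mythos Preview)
Your overall strategy---case-split on $s=|S|$ against a threshold, concentration of $|S\cap C|$ for small $s$ and of the detected fraction for large $s$---is the same as the paper's. Your Case~(ii) step, observing directly that $Y'=|\{x\in S\setminus C:\mathcal Y_x=1\}|\sim\mathrm{Binomial}\bigl(s,\tfrac{2}{3}(1-\epsilon)\bigr)$, is a genuine simplification: the paper instead argues in two stages (first that $|S\setminus C|$ is a large fraction of $N-|C|$ via a hypergeometric bound, then that the $\mathcal Y$-coins concentrate), introducing an auxiliary parameter $\ell$ and optimizing over it. Your one-shot binomial avoids that extra parameter.

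There is, however, a concrete mismatch in your optimization paragraph. You assert that your two constraints ``to leading order $(1-u)^2\gtrsim 1/A$ and $(u-v)^2\gtrsim u/A$'' reduce, after eliminating $u$, to the quadratic $x^2-(2+3/A)x+1=0$ whose smaller root is $A'$. They do not: saturating your constraints gives $1-v\approx (1-u)+(u-v)\approx A^{-1/2}+\sqrt{u/A}\approx 2A^{-1/2}$, whereas $A'$ satisfies $(1-A')^2=3A'/A$, i.e.\ $1-A'\approx\sqrt{3/A}$. In the paper, the quadratic defining $A'$ arises from the \emph{small-$|S|$} case, and specifically from parametrizing the hypergeometric with $n=|S|$ (Claim~A.3), which produces the constraint $x(1/x-1)^2\ge 3/A$, equivalently $(1-x)^2=3x/A$. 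You instead parametrize with $n=|C|$, which yields a linear-in-$1/A$ constraint on $(1-u)^2$ with no $u$-factor on the right, so the same quadratic cannot emerge from your route.

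This is not fatal: your constraints (once the hidden constants from $t_1$ and the $\alpha,(1-\epsilon)$ factors are tracked) actually permit a \emph{larger} $v$ than $(1-A^{-1/2})A'$, so the lemma's hypothesis on $w$ is more than enough. But you have not carried out that bookkeeping, and your stated mechanism for why $A'$ appears is wrong. To finish, either redo Case~(i) with the paper's $n=|S|$ parametrization (so that $A'$ is the natural threshold and Case~(ii) supplies the $(1-A^{-1/2})$ factor), or keep your parametrization and verify directly that the stated bound on $w$ lies below your threshold.
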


\begin{proof} For each $s\in [0,1]$, let $E_s$ denote the event $1-\frac{|C|}{N}>s$ and $p_s$ denote the probability that $E_s$ occurs. 

\bigskip

\begin{Claim}\label{If1-sBoundthen1-psbound} 
Both of the following hold:
\begin{enumerate}[label=\roman*)]
\item For $1-s\geq 1/3$, we have $1-p_s \leq e^{-2N(s-2/3)^2}$.
\item For $1-s\leq 1/3$, we have $p_s\leq e^{-2N(s-2/3)^2}$ 
\end{enumerate}\end{Claim}

\begin{claimproof} We regard the generation of $C$ as a sequence of $N$ independent coin flips, where the coin takes value~1 with probability $1/3$ and 0 with probability $2/3$. If $1-s\geq 1/3$, then, by  \cref{HoeffdingBd2}, the probability that the number of successes of this sequence of flips is at at least $N(1-s)$ is bounded from above by $\textnormal{exp}\left(-2N(s-2/3)^2\right)$. As $E_s$ is the event that there are strictly fewer than $N(1-s)$ successes, it follows that $1-p_s \leq e^{-2N(s-2/3)^2}$. This proves i). An identical argument, applying \cref{HoeffdingBd1}, proves ii). \end{claimproof}

\bigskip

For each $\ell\in [0,1]$, let $F_{\ell}$ denote the event that $|S\setminus C|\geq\ell (N-|C|)$ and $\overline{F}_{\ell}$ denote the event that $|S\setminus C|<\ell (N-|C|)$. We show that there exists an $\ell^*\in [0,1]$ such that if $F_{\ell^*}$ holds, then a) fails with high probability and, if $\overline{F}_{\ell^*}$ holds, then b) fails with high probability so, in any case, the probability of both of them holding is low.

\bigskip  

\begin{Claim}\label{IfSlargeThenFellCLow} For $\ell, m\in \left[0, 1\right]$ with $\ell<m$, We have $$\textnormal{Pr}\left[|S\setminus C|<\ell (N-|C|)\bigg\vert |S|\geq mN\right]\leq e^{-2N/9}+e^{-2mN\left(1-\frac{\ell}{m}\right)^2/9} $$ 
 \end{Claim}

\begin{claimproof}  For the purpose of \Cref{IfSlargeThenFellCLow}, we temporarily regard the generation of $S$ as randomly drawing $|S|$ elements from $[N]$, where $|S|\geq mN$ and the elements of $[N]\setminus C$ are marked. Let $\mathcal{Z}$ be a random variable corresponding to $|S\setminus C|$. We apply \Cref{TailBdCor1}. Let $\lambda=\ell(N-|C|)$. Since $|S|\geq mN$ We then have
$$\ell(N-|C|)<\frac{|S|(N-|C|)}{N}$$
Now, if $\lambda=0$, then either $\ell=0$ or $N=[C]$, and thus $\textnormal{Pr}[\overline{F}_{\ell}]=0$, since $|S|\geq mN>0$. Thus, by \Cref{TailBdCor1}, we get
$$\textnormal{Pr}\left[\overline{F}_{\ell}\bigg\vert |S|\geq mN\right]\leq \textnormal{exp}\left(-2|S|\left(\frac{N-|C|}{N}-\frac{\ell (N-|C|)}{|S|}\right)^2\right)$$
Furthermore, we have $\frac{N-|C|}{N}-\frac{\ell (N-|C|)}{|S|}>0$, so we get
$$|S|\left(\frac{N-|C|}{N}-\frac{\ell (N-|C|)}{|S|}\right)^2\geq\frac{|S|(N-|C|)^2}{N^2}\left(1-\frac{\ell}{m}\right)^2\geq\frac{m(N-|C|)^2}{N}\left(1-\frac{\ell}{m}\right)^2$$
Thus,
$$\textnormal{Pr}\left[\overline{F}_{\ell}\bigg\vert |S|\geq mN\right]\leq\textnormal{exp}\left(-2m\frac{(N-|C|)^2\left(1-\frac{\ell}{m}\right)^2}{N}\right)$$
Thus, conditioning on $E_s$, we have the following for each $s\in [0,1]$:
$$\textnormal{Pr}\left[\overline{F}_{\ell}\bigg\vert |S|\geq mN\ \textnormal{and}\ E_s\right]<e^{-2s^2mN\left(1-\frac{\ell}{m}\right)^2}$$
Since this holds for each $s\in [0,1]$ and each $p_s$ lies in $[0,1]$, we have the following by i) of \Cref{If1-sBoundthen1-psbound}
$$\textnormal{Pr}\left[\overline{F}_{\ell}\bigg\vert |S|\geq mN\right]<\min_{0\leq s\leq 1}\left((1-p_s)+p_s\cdot e^{-2s^2mN\left(1-\frac{\ell}{m}\right)^2}\right)$$
$$\leq \min_{0\leq s\leq 2/3}\left(e^{-2N(s-2/3)^2}+e^{-2s^2mN\left(1-\frac{\ell}{m}\right)^2}\right)$$
Taking the midpoint $s=1/3$, we obtain the desired bound. This proves \Cref{IfSlargeThenFellCLow}. \end{claimproof}

\bigskip

Analogously, we have the following:

\begin{Claim}\label{IntermedCLBound1} For $\ell, m\in \left[0, 1\right]$ with $\ell\geq m>0$, We have $$\textnormal{Pr}\left[|S\cap C|>\ell |C| \, \bigg\vert \, |S|\leq mN\right]\leq e^{-N/18}+e^{-|S|\left(\frac{\ell}{m}-1\right)^2/18}$$ 
 \end{Claim}

\begin{claimproof}  For the purpose of \Cref{IfSlargeThenFellCLow}, we temporarily regard the generation of $S$ as randomly drawing $|S|$ elements from $[N]$, where $|S|\leq mN$ and the elements of $C$ are marked. Let $\mathcal{Z}'$ be a random variable corresponding to $|S\cap C|$. We apply \Cref{TailBdCor2}. Let $\lambda=\ell |C|$. Note that $\ell |C|>\frac{|S|\cdot |C|}{N}$, so we obtain
$$\textnormal{Pr}\left[\mathcal{Z}'>\ell |C|\,\bigg\vert \,|S|\leq mN\right]\leq\textnormal{exp}\left(-2|S|\left(\frac{\ell |C|}{|S|}-\frac{|C|}{N}\right)^2\right)$$
Note that we have $\frac{\ell|C|}{|S|}-\frac{|C|}{N}>0$, so we obtain
$$\frac{\ell|C|}{|S|}-\frac{|C|}{N}>\frac{|C|}{N}\left(\frac{\ell}{m}-1\right)\geq 0$$
and thus
$$|S|\left(\frac{|C|}{N}-\frac{\ell|C|}{|S|}\right)^2\geq |S|\cdot\frac{|C|^2(\frac{\ell}{m}-1)^2}{N^2}\geq 0$$
and we have
$$\textnormal{Pr}\left[\mathcal{Z}'>\ell |C|\, \bigg\vert\, |S|\leq mN\right]\leq\textnormal{exp}\left(-2|S|\cdot\frac{|C|^2(\frac{\ell}{m}-1)^2}{N^2}\right)$$
Thus, conditioning on $\overline{E}_s$, we have the following for each $s\in [0,1]$:
$$\textnormal{Pr}\left[\mathcal{Z}'>\ell |C|\, \bigg\vert \,|S|\leq mN\ \textnormal{and}\ \overline{E}_s\right]\leq\textnormal{exp}\left(-2|S|\cdot (1-s)^2\left(\frac{\ell}{m}-1\right)^2\right)$$
Since this holds for each $s\in [0,1]$ and each $1-p_s$ lies in $[0,1]$, we have the following by ii) of \Cref{If1-sBoundthen1-psbound}
$$\textnormal{Pr}\left[\mathcal{Z}'>\ell|C|\,\bigg\vert \,|S|\leq mN\ \textnormal{and}\ \overline{E}_s\right]\leq\min_{0\leq s\leq 1}\left(p_s+(1-p_s)\cdot e^{-2|S|(1-s)^2N\left(\frac{\ell}{m}-1\right)^2}\right)$$

$$\leq \min_{\frac{2}{3}\leq s\leq 1}\left(e^{-2N\left(s-2/3\right)^2}+e^{-2|S|(1-s)^2\left(\frac{\ell}{m}-1\right)^2}\right)$$
Taking the midpoint $s=5/6$, we obtain the desired bound. 
\end{claimproof}

\bigskip

We now consider event b). We have a sequence of $|S\setminus C|$ Bernouilli trials, where a success for trial $x$ is defined as obtaining $\mathcal{Y}_x=1$. Let $\mathcal{Y}$ be a random variable denoting the number of successes among these $|S\setminus C|$ trials. We show that, if $|S\setminus C|$ is a large fraction of $N-|C|$, then b) fails with high probability.

\bigskip

\begin{Claim}\label{BoundonB)ForFEll} For any $\ell, m\in \left[0, 1\right]$ with $m\leq\ell(1-\epsilon)$ , we have $$\textnormal{Pr}\left[\mathcal{Y}\leq m(N-|C|)\,\bigg\vert\, F_{\ell}\right]\leq e^{-2N/9}+e^{-2N(\ell(1-\epsilon)-m)^2/3}$$
\end{Claim}

\begin{claimproof} By assumption, we have $|S\setminus C|\geq \ell\cdot (N-|C|)$.  Since $m(N-|C|)\leq \ell(N-|C|)(1-\epsilon)$, we can apply \cref{HoeffdingBd1} to obtain

$$\textnormal{Pr}\left[\mathcal{Y}\leq m(N-|C|)\,\bigg\vert \,F_{\ell}\right]\leq \textnormal{exp}\left(-2\frac{(\ell(N-|C|)(1-\epsilon)-(N-|C|)m)^2}{N-|C|}\right)$$

Thus,

$$\textnormal{Pr}\left[\mathcal{Y}\leq m(N-|C|)\,\bigg\vert \,F_{\ell}\right]\leq \textnormal{exp}\left(-2(N-|C|)(\ell(1-\epsilon)-m)^2\right)$$

 For any $s\in [0,1]$, if $E_s$ occurs, then we obtain
$$\textnormal{Pr}\left[\mathcal{Y}\leq m(N-|C|)\,\bigg\vert \,F_{\ell}\ \textnormal{and}\ E_s\right]\leq\textnormal{exp}\left(-2sN\left(\ell(1-\epsilon)-m\right)^2\right)$$

Thus,

$$\textnormal{Pr}\left[\mathcal{Y}\leq m(N-|C|)\,\bigg\vert \,F_{\ell}\right]\leq \min_{s\in [0,1]}\left((1-p_s)+p_s\cdot e^{-2sN(\ell(1-\epsilon)-m)^2}\right)$$
$$\leq \min_{0\leq s\leq 2/3}\left(e^{-2N(s-2/3)^2}+e^{-2sN(\ell(1-\epsilon)-m)^2}\right)$$
where the rightmost inequality is obtained from  i) of \Cref{If1-sBoundthen1-psbound}. Taking the midpoint $s=1/3$, we obtain the desired bound. This proves \Cref{BoundonB)ForFEll}. \end{claimproof}

\bigskip

Applying \Cref{IntermedCLBound1}, we now show that, if $|S|$ is small, then a) holds with low probability. To prove this we first prove the following intermediate result.

\begin{Claim}\label{IntermResUpperLower} Let $x\in (0,1]$. Then $$\textnormal{Pr}\left[|S\cap C|>\alpha |C|\,\bigg\vert\, |S|/N\leq \alpha x\right]\leq e^{-N/18}+e^{-\left(\frac{1}{x}-1\right)^2\alpha xN/18}$$ \end{Claim}

\begin{claimproof}  We apply \Cref{IntermedCLBound1} with $\ell=\alpha$ and $m=\alpha x$, so we get
$$\textnormal{Pr}\left[\textnormal{a) holds}\,\bigg\vert\, |S|\leq mN\right]\leq e^{-N/18}+e^{-\left(\frac{\ell}{m}-1\right)^2|S|/18}$$
Thus, we have the following bound in the case where we have the exact value of $|S|/N$.
\begin{inequality}\label{IntermedStepEqual1}\textnormal{Pr}\left[\textnormal{a) holds}\,\bigg\vert\, |S|/N=\alpha x\right]\leq e^{-N/18}+e^{-\left(\frac{1}{x}-1\right)^2\alpha xN/18}\end{inequality}

Note that we have the following:
$$\textnormal{Pr}\left[\textnormal{a) holds}\bigg\vert |S|/N\leq\alpha x\right]\leq\max\left\{\textnormal{Pr}\left[\textnormal{a) holds}\bigg\vert |S|=\alpha Ny\right]: y\in\left\{0, \frac{1}{\alpha N}, \frac{2}{\alpha N}, \cdots, \frac{\left\lfloor\alpha N x\right\rfloor}{\alpha N}\right\} \right\}$$
If $|S|=0$, then a) holds with probability zero. Furthermore, since the function $y\rightarrow \left(\frac{1}{y}-1\right)^2y$ is monotone decreasing over $(0,1]$ and $\frac{\lfloor\alpha Nx\rfloor}{\alpha N}\leq x$, it follows from \cref{IntermedStepEqual1} that 
$$\textnormal{Pr}\left[\textnormal{a) holds}\bigg\vert |S|/N\leq\alpha x\right]\leq e^{-N/18}+e^{-\left(\frac{1}{x}-1\right)^2\alpha xN/18}$$
as desired.
\end{claimproof}

Applying \Cref{IntermResUpperLower}, we have the following:

\begin{Claim} If $|S|/N\leq A'\alpha$ then a) holds with probability at most $\delta$.
\end{Claim}

\begin{claimproof} We first note that $e^{-N/18}\leq\delta/2$, since $N\geq 18\log(2/\delta)$, so now it suffices to show that $x=A'$ satisfies the inequality
$$e^{-\left(\frac{1}{x}-1\right)^2\alpha xN/18}\leq\delta/2$$

We want $x\in (0,1]$ to be as large as possible subject to the above constraint. Equivalently, we want $x\in (0, 1]$ to be as large as possible subject to
$$\alpha x\left(\frac{1}{x}-1\right)^2 N/18\geq \log(2/\delta)$$
Since $N\geq\frac{6A\log(2/\delta)}{\alpha^2}$, and since $\alpha\in [0,1/2)$, a sufficient condition for this inequality to hold is that
\begin{inequality}\label{QuadratAtEq}x\left(\frac{1}{x}-1\right)^2\geq\frac{3}{A}\end{inequality}
Since $x\left(\frac{1}{x}-1\right)^2$ is monotone-decreasing over $(0,1]$ we just need to solve \cref{QuadratAtEq} at equality, so we have
$$x=\frac{(2+3/A)\pm\sqrt{\left(2+\frac{3}{A}\right)^2-4}}{2}$$
Since $0<x<1$, we have $x=A'$, as desired. \end{claimproof}

\bigskip

Thus, if $|S|/N\leq A'\alpha$ then we are done. Now suppose that $|S|/N>A'\alpha$. We show that b) holds with probability at most $\delta$, and then we are done. For each $\ell\in [0, 1]$, let $q_{\ell}$ denote the probability that $F_{\ell}$ occurs. Since $|S|/N>A'\alpha$, we can apply \Cref{IfSlargeThenFellCLow} to any $\ell\in\left[0, A'\alpha\right)$. Likewise, we can apply \Cref{BoundonB)ForFEll} to any $\ell\in\left[w/(1-\epsilon), 1\right]$. Now, our bound on $w$ implies that $\frac{w}{1-\epsilon}<A'\alpha$, so $\left(\frac{w}{1-\epsilon}, A'\alpha\right)\neq\varnothing$, and, for each $\ell\in (w/(1-\epsilon), A'\alpha)$, we have
$$\textnormal{Pr}\left[\mathcal{Y}\leq w(N-|C|)\right]\leq\textnormal{Pr}\left[ \overline{F}_{\ell}\right]+\textnormal{Pr}\left[\mathcal{Y}\leq w(N-|C|)\ \textnormal{and}\ F_{\ell}\right]$$
$$=(1-q_{\ell})+q_{\ell}\cdot\textnormal{Pr}\left[\mathcal{Y}\leq w(N-|C|) \bigg\vert F_{\ell}\right]$$
\begin{inequality}\label{BoundToLeft}\leq\left(e^{-2N/9}+e^{-2N(A'\alpha-\ell)^2/(9A'\alpha)}\right)+\left(e^{-2N/9}+e^{-2N\left(\ell(1-\epsilon)-w\right)^2/3}\right)\end{inequality}
We need the interval $\left(\frac{w}{1-\epsilon}, A'\alpha\right)$ to be sufficiently long. Let $L$ be the radius of this interval. The bound above in \cref{BoundToLeft}, with $\ell$ being the midpoint of the interval $\left(\frac{w}{1-\epsilon}, A'\alpha\right)$, implies that 
$$\textnormal{Pr}\left[\mathcal{Y}\leq w(N-|C|)\right]\leq 2e^{-2N/9}+e^{-2NL^2/(9A'\alpha)}+e^{-2NL^2(1-\epsilon)^2/3}$$
Note that $L=\frac{1}{2}\left(A'\alpha-\frac{w}{1-\epsilon}\right)$, so $\frac{L}{A'\alpha}=\frac{1}{2}\left(1-\frac{w}{(1-\epsilon)A'\alpha}\right)$, so our bound in \cref{BoundOnWEpsALphDelt} implies that $L/A'\alpha\geq\frac{1}{\sqrt{A}}$. Thus, we get
\begin{inequality}\label{ThreeTermEq}\textnormal{Pr}\left[\mathcal{Y}\leq w(N-|C|)\right]\leq 2e^{-2N/9}+e^{-(2NA'\alpha )/(9A)}+e^{-2N(A'\alpha)^2(1-\epsilon)^2/(3A)}\end{inequality}
Bounding everything in \cref{ThreeTermEq} is just a matter of $N$ being large enough. In particular, note that we can bound $N\alpha^2(1-\epsilon)^2/A$. The dominating term in \cref{ThreeTermEq} is the rightmost term. 

\begin{Claim}\label{BoundonDomTerm} $e^{-N(A'\alpha)^2(1-\epsilon)^2/(3A)}\leq\delta/2$. \end{Claim}

\begin{claimproof} Note that $A$ is sufficiently large that $A'\geq 4/5$ It suffices that
$$N(A')^2\geq\frac{3A\log(2/\delta)}{(1-\epsilon)^2}$$
Note that $A$ is sufficiently large that $A'\geq\frac{1}{\sqrt{2}}$, so it follows from \cref{BoundonNEpsAlphDelt} that this bound indeed holds.
\end{claimproof}

\bigskip

Now we just bound the sum of the other two smaller terms:

\bigskip

\begin{Claim}\label{SumTwoBoundDeltaHalf} $2e^{-2N/9}+e^{-(2NA'\alpha)/(9A)}\leq\delta/2$. \end{Claim}

\begin{claimproof} The dominating term is the term on the right. Note that $A$ is sufficiently large that $2A'/9\geq 1/6$. Thus, we get
$$e^{-(-2NA'\alpha)/(9A)}\leq e^{-N\alpha/(6A)}$$
Now, our bound on $N$ implies that $\frac{N\alpha}{6A}\geq\log(2/\delta)/\alpha$. Since $\alpha\in [0,1/2)$, we get $e^{(-(2NA'\alpha)/(9A)}\leq\left(\frac{\delta}{2}\right)^2$. Now we consider the other term. Since $\alpha^2\leq\frac{1}{4}$, we get $N\geq 2400\log(2/\delta)$, so $2e^{-2N/9}\leq 2\left(\frac{\delta}{2}\right)^{500}$. Thus, we get
$$2e^{-2N/9}+e^{-(2NA'\alpha)/(9A)}\leq+2\left(\frac{\delta}{2}\right)^{500}+\left(\frac{\delta}{2}\right)^2\leq\frac{\delta}{2}$$ as desired.
\end{claimproof}

\bigskip

Applying \Cref{BoundonDomTerm} and \Cref{SumTwoBoundDeltaHalf} to \cref{ThreeTermEq}, we are done. \end{proof}
\end{appendix}

\end{document}